\documentclass[11pt]{article}
\usepackage{graphicx} 
\usepackage{amssymb}
\usepackage{bm}
\usepackage{rpmacros}
\usepackage{fullpage}
\usepackage{comment}
\usepackage{xcolor}
\usepackage[noend]{algpseudocode}
\usepackage{algorithm}
\usepackage{algorithmicx}
\usepackage{float}
\usepackage[left]{lineno}

\usepackage[colorlinks=true, linkcolor=blue, citecolor=blue, urlcolor=blue, pagebackref]{hyperref}

\hypersetup{
  linkcolor=[rgb]{0.3,0.3,0.6},
  citecolor=[rgb]{0.2, 0.6, 0.2},
  urlcolor=[rgb]{0.6, 0.2, 0.2}
}

\usepackage[nameinlink,capitalise]{cleveref}
\usepackage{nth}
\usepackage{intcalc}
\usepackage{etoolbox}
\usepackage{xstring}

\usepackage{ifpdf}
\ifpdf
\else
\usepackage[quadpoints=false]{hypdvips}
\fi

\newcommand{\ECCCurl}[2]{http://eccc.hpi-web.de/report/\ifnumcomp{#1}{>}{93}{19}{20}#1/#2/}

\newcommand{\shortECCC}[2]{\texttt{\href{http://eccc.hpi-web.de/report/\ifnumcomp{#1}{>}{93}{19}{20}#1/#2/}{eccc:TR#1-#2}}}

\newcommand{\parseECCC}[1]{
\StrSubstitute{#1}{TR}{}[\tmpstring]%
\IfSubStr{\tmpstring}{/}{ 
\StrBefore{\tmpstring}{/}[\ecccyear]%
\StrBehind{\tmpstring}{/}[\ecccreport]%
}{
\StrBefore{\tmpstring}{-}[\ecccyear]%
\StrBehind{\tmpstring}{-}[\ecccreport]%
}%
\shortECCC{\ecccyear}{\ecccreport}}

\title{Near-Quadratic Time Multivariate Polynomial Interpolation}
\author{Chandra Kanta Mohapatra\thanks{ Department of Computer Science. \texttt{chandra.mohapatra@aalto.fi}. The author is supported by Helsinki Institute for Information Technology (HIIT) postdoctoral fellowship. A part of this work was done when the author was a graduate student at Indian Institute of Technology, Bombay.} \\[0.15em]
    Aalto University}

\date{}

\usepackage{amsthm}
\usepackage{thmtools,thm-restate}

\numberwithin{equation}{section}
\declaretheoremstyle[bodyfont=\it,qed=\qedsymbol]{noproofstyle}

\declaretheorem[name=Observation,numbered=no]{observation*}

\declaretheorem[numberlike=equation]{theorem}

\declaretheorem[name=Theorem,numbered=no]{theorem*}

\declaretheorem[numberlike=equation]{lemma}
\declaretheorem[name=Lemma,numbered=no]{lemma*}

\declaretheorem[name=Corollary,numbered=no]{corollary*}

\declaretheorem[numberlike=equation]{proposition}
\declaretheorem[name=Proposition,numbered=no]{proposition*}

\declaretheorem[name=Claim,numbered=no]{claim*}

\declaretheorem[name=Conjecture,numbered=no]{conjecture*}

\declaretheorem[name=Question,numbered=no]{question*}

\declaretheoremstyle[bodyfont=\it,qed=$\lrcorner$]{defstyle}

\declaretheorem[unnumbered,name=Definition,style=defstyle]{definition*}

\declaretheorem[unnumbered,name=Example,style=defstyle]{example*}

\declaretheorem[unnumbered,name=Notation=defstyle]{notation*}

\declaretheorem[unnumbered,name=Construction,style=defstyle]{construction*}

\declaretheorem[numberlike=equation,style=defstyle]{remark}
\declaretheorem[unnumbered,name=Remark,style=defstyle]{remark*}



\newcommand{\vx}{\mathbf{x}}
\newcommand{\vy}{\mathbf{y}}

\newcommand{\vb}{\mathbf{{b}}}
\newcommand{\ve}{\textbf{e}}
\newcommand{\vv}{\mathbf{{v}}}
\newcommand{\vu}{\mathbf{{u}}}
\newcommand{\krymin}[2]{\mu_{#2}^{#1}}
\newcommand{\matmin}[1]{\mu^{#1}}
\newcommand{\krylov}[2]{\mathcal{K}(#1,#2)}
\newcommand{\corank}{\text{co-rank}}

\newcommand{\vzero}{\bm{0}}
\newcommand{\intbr}[1]{\{0,1,\ldots,#1-1\}}

\begin{document}

\maketitle

\begin{abstract}
  
  Let $\F$ be a finite field. Given $N$ distinct points $\bm{\alpha}_1,\ldots,\bm{\alpha}_N \in \F^m$ and associated values $\beta_1,\ldots,\beta_N \in \F$, we study the problem of computing an $m$-variate polynomial $f(\vy) \in \F[\vy]$ of individual degree less than $d$, with $d^m \ge N$, that satisfies $f(\bm{\alpha}_i)=\beta_i$ for all $i\in[N]$, whenever such an $f$ exists. This is known as the\textit{ multivariate polynomial interpolation} problem. In this work, given $N=d^m$, we present a deterministic algorithm for multivariate polynomial interpolation over an arbitrary finite field $\F$ that runs in near-quadratic time $N^{2+o(1)}\poly(m,d,\log|\F|)$, assuming $d$ is $\omega(1)$.  This bound holds whenever the associated \emph{multivariate Vandermonde} matrix is nonsingular, has constant co-rank, or admits some special structure. In the general case, we give a randomized Las Vegas algorithm attaining the same near-quadratic running time, provided the finite field is sufficiently large and $d$ grows similar to the previous case. Unlike other multivariate interpolation strategies, we do not assume black-box access to the polynomial. In particular, we are not permitted to query arbitrary input points to obtain their evaluations.

Our interpolation result follows by combining Wiedemann's classical approach~\cite{w86} for solving sparse linear systems with known bounds for fast \emph{multivariate multipoint evaluation} over finite fields~\cite{umans08, KU11, BGKM23, BGGKU24}. Specifically, our deterministic interpolation strategy relies on a deterministic variant of Wiedemann's approach, which is more commonly used in the randomized setting.

\end{abstract}

\section{Introduction}

Solving linear systems is one of the most fundamental problems in computational algebra. Given a matrix $A\in \F^{n\times n}$ and a vector $\vb\in \F^n$, over any arbitrary finite field $\F$, the goal is to compute $\vx\in \F^n$ satisfying $A\vx=\vb$. The complexity of this problem is closely related to matrix multiplication, and the best known algorithms run in $O(n^\omega)$ time, where $\omega$ is the matrix multiplication exponent. In particular, no truly quadratic-time algorithm is known for arbitrary dense systems.

An important set of structured instances of linear system solving arises from polynomial interpolation. The input to the multivariate interpolation problem consists of $N \leq d^m\,$ $(d\,, m \in \N)$ distinct data pairs $\{(\bm\alpha_1,\beta_1),(\bm\alpha_2,\beta_2),\ldots,(\bm\alpha_N,\beta_N)\}$,  where $\bm\alpha_i\in \F^m$ and $\beta_i\in \F$ for each $i\in[N]$, and $\bm\alpha_i\neq \bm\alpha_j$ whenever $i\neq j$. The goal is to find an $m$-variate polynomial $f(\vy)\in\F[\vy]$ of individual degree less than $d$ such that $f(\bm\alpha_i)=\beta_i$ for all $i\in[N]$, whenever such a polynomial exists. The input of the mutivariate interpolation problem can be described using $N(m+1)$ field elements, hence that is the problem input size. Throughout this discussion, $N$ denotes the number of input points. The interpolation space for monomials of individual degree less than $d$ in $m$ variables has dimension $d^m$, thus for suitable $d,m \in \N$ such that $d^m \geq N$ ensures the interpolation space is large enough for $N$ input points. 

For simplicity we take $N = d^m$ throughout, so that the associated evaluation  matrix, a.k.a \emph{multivariate Vandermonde matrix} $V$ is square. This is without loss of generality, if $N < d^m$, append $d^m - N$ zero rows to $V$ and $d^m - N$ zeros to the associate value tuple $\bm\beta=(\beta_1,\,\ldots,\,\beta_N)$. The appended equations are vacuous, so the solution set is unchanged. Throughout, the run time is always mentioned in terms of the number of underlying field operations.

In the univariate setting ($m=1$), the input size is $2d$ field elements and, whenever the input points are distinct, there always exists a unique interpolating polynomial. This follows from the fact that the associated evaluation matrix, namely the Vandermonde matrix, has full rank at distinct points. Moreover, the structure of the Vandermonde matrix enables substantially faster algorithms than generic linear system solving. In particular, univariate interpolation admits near-linear time algorithms $\widetilde{O}(d)$ using fast multipoint evaluation due to Fiduccia \cite{Fid72} and, Borodin and Moenck \cite{BM74}. The multivariate setting is considerably more challenging. The associated interpolation matrix is a \emph{multivariate Vandermonde matrix}\footnote{The rows of the matrix corresponds to evaluation points and the columns correspond to all possible monomials. Each entry is the evaluation of the corresponding monomial at the corresponding evaluation point.}, which may be singular even when all evaluation points are distinct. Consequently, the interpolation problem may have multiple solutions or no solution at all.

The interpolation problem has been studied in various flavors. In one set up, the polynomial is given to us as a blackbox and we want to interpolate it using query access. That is, we can query points of our choice to obtain their evaluations. This set up is widely studied in the sparse polynomial interpolation literature such as Ben-Or--Tiwari~\cite{BenOrTiwari1988}. In contrast, dense multivariate interpolation naturally reduces to solving a large linear system. Thus the other set up is to consider structured evaluation points. That is, when the evaluation points come from a \textit{product-set} structure, interpolation can be performed in near-linear time in the input size using induction on each variable. However, as far as we are aware, no near-linear algorithm is known even for the general bivariate case over arbitrary evaluation points, which would suggest a route towards the general multivariate setting.

In the univariate case, fast interpolation~\cite[Algorithm 10.11]{gathenGerhard03} itself relies on fast multipoint evaluation as a subroutine, and both tasks admit near-linear time algorithms. The idea is to recursively solve the interpolation problem by dividing it into two roughly equal size sub-problems of the same kind. The recursion crucially relies on Euclidean division for univariates. A natural question is whether a similar reduction can be exploited in the multivariate setting. Unfortunately, it is not clear how to proceed with the recursion because of the non-Euclidean nature of multivariate polynomial rings. In this work we instead show how to exploit fast multivariate multipoint evaluation as a subroutine for multivariate interpolation using a different route.

Although the evaluation points themselves may be completely unstructured, the associated interpolation matrix remains a multivariate Vandermonde matrix, denoted by $V$. This naturally raises the question of whether one can exploit this algebraic structure to obtain algorithms faster than generic $O(N^\omega)$-time linear system solving. Turns out, if we can guarantee fast~(near-linear) matrix-vector multiplication then  using iterative linear algebra techniques solves the linear system faster compared to the trivial time bound.
For this strategy to work, we need to ensure multivariate Vandermonde matrix-vector multiplication, alternatively known as \emph{multivariate multipoint evaluation} runs fast. This problem has been studied in a series of work~\cite{umans08, KU11, BGKM23, BGGKU24} and known to have near-linear time algorithms over finite fields sometimes under some assumptions. The recent result of Bhargava, Ghosh, Guo, Kumar, and Umans~\cite{BGGKU24} showed that matrix-vector multiplication by multivariate Vandermonde matrices over all finite fields, and we will mainly use their result in this work.

Since we have fast matrix-vector multiplication available, the next step is to employ fast iterative linear algebra techniques to solve the structured system. Our work uses Wiedemann's~\cite{w86} method for structured linear system solving. The Wiedemann framework basically does the following. Instead of explicitly inverting the matrix $A$ for solving $A\vx=\vb$, compute the minimal polynomial of the Krylov sequence 

\[
\krylov{A}{\vb}:=\{A^i\vb\}_{i\in\mathbb N},
\]
and use it to recover a solution vector $\vx$. The definitions of minimal polynomial and Krylov sequence can be found in the preliminaries~(\autoref{sec:prelim}). This is exactly what the deterministic variant of Wiedemann's method do.  Thus, the deterministic Wiedemann approach along with fast matrix-vector multiplication for generalized Vandermonde matrices enables us to get a near-quadratic time algorithm for multivariate interpolation over finite fields. The nature of our algorithm; that is \emph{deterministic} or \emph{randomized} depends on the rank of the system.  

Our results show that even for completely unstructured evaluation sets, the algebraic structure of generalized Vandermonde matrices helps to yield a nontrivial improvement over generic dense linear algebra and provides further evidence that multivariate interpolation may admit faster algorithms than previously known. We now state our main result.

\begin{theorem}[Informal]
Let $\F$ be a finite field and $N=d^m$. Given $N$ evaluation constraints
\[
f(\bm\alpha_i)=\beta_i,\qquad i\in[N],
\]
where $\bm\alpha_i\in\F^m$ and $\beta_i\in\F$, for an unknown $m$-variate polynomial $f(\vy)\in\F[\vy]$ of individual degree less than $d$, we can compute such a polynomial whenever one exists in time
$$N^{2+o(1)}\poly(m,d,\log|\F|).$$
The algorithm is deterministic when the associated multivariate Vandermonde matrix has full rank, constant co-rank  or admits additional structure\footnote{Formally we refer to the case in \autoref{prop: min-poly-constant-not-zero}.}, and otherwise by a Las Vegas randomized algorithm for the general case.
\end{theorem}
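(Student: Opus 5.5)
The plan is to reduce interpolation to solving $V\vx=\bm\beta$, where $V\in\F^{N\times N}$ is the multivariate Vandermonde matrix. We then solve this system with Wiedemann's method, using fast multivariate multipoint evaluation~\cite{BGGKU24} as the matrix--vector oracle.

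First, the oracle. Computing $V\vx$ is exactly evaluating the polynomial with coefficient vector $\vx$ at $\bm\alpha_1,\ldots,\bm\alpha_N$. By~\cite{BGGKU24} this costs $N^{1+o(1)}\poly(m,d,\log|\F|)$ when $d=\omega(1)$. Products with $V^{T}$ are transposed multipoint evaluation, and the transposition principle gives them the same cost. The transpose is needed only if we symmetrize or project.

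Second, the deterministic core. We build the Krylov sequence $\vb,V\vb,\ldots,V^{2N-1}\vb$ with $2N$ oracle calls, for total cost $N^{2+o(1)}\poly(m,d,\log|\F|)$. Storing these $2N$ vectors takes $O(N^2)$ space, which is acceptable. To get the minimal polynomial $\krymin{V}{\vb}$ of the sequence deterministically, we avoid random projections. Instead we run the vector version of Berlekamp--Massey. An alternative is to iterate over the $N$ coordinate projections $\ve_j^T$ and take the lcm of the scalar minimal polynomials, as in Wiedemann's deterministic variant. Either route costs $N\cdot\widetilde O(N)$ on top of the Krylov vectors.

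Write $\krymin{V}{\vb}(t)=c_0+c_1t+\cdots+c_kt^k$. If $c_0\neq0$, then
\[
\vx=-c_0^{-1}\sum_{i=1}^{k}c_iV^{i-1}\vb
\]
solves the system, and it is assembled from the stored Krylov vectors in $O(N^2)$ operations. This is the case of \autoref{prop: min-poly-constant-not-zero}. It covers nonsingular $V$, since then $t\nmid\matmin{V}$. If $c_0=0$ and $t\,\|\,\krymin{V}{\vb}$ while $\vb$ lies in the column space, we recover the kernel-shifted solution in the same way.

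Third, constant co-rank. Here the obstruction is only $t^{j}\mid\krymin{V}{\vb}$ with $j$ bounded by the nilpotent part. We plan to handle it by writing $\krymin{V}{\vb}=t^{j}g(t)$ with $g(0)\neq0$. A solution exists iff $\vb\in\mathrm{im}\,V$. We then solve on the component $g(V)\vb$, plus a correction computed from the $O(1)$-dimensional generalized kernel by a brute-force small linear system in $\widetilde O(N)$ per basis vector.

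Fourth, the general case. We use the randomized preconditioning of Kaltofen--Saunders: pick random Toeplitz or butterfly $T_1,T_2$ and a random diagonal $D$, and apply Wiedemann to $\tilde V=T_1VT_2D$. Over a sufficiently large field, with probability at least $1/2$, the leading $r\times r$ block is nonsingular and the minimal polynomial has nonzero constant term after a further shift by $t$, so the second step applies. Products with $T_i$ and $D$ cost $\widetilde O(N)$, so the total time is unchanged. We verify the output by one multipoint evaluation, which makes the algorithm Las Vegas. Detecting inconsistency (no solution) uses Wiedemann's random-projection certificate, which is also verified.

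The main obstacle I expect is step three, together with deterministically distinguishing ``no solution'' from ``singular but consistent'' without randomness. Precisely characterizing when deterministic recovery works is what the cited proposition is for, and I would follow it closely rather than claim more.
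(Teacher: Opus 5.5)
Your handling of the case $\krymin{V}{\vb}(0)=0$ has a genuine gap, and this is exactly the case the deterministic constant co-rank claim needs.

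First, when $c_0=0$, no solution lies in the Krylov space $K=\mathrm{span}\{V^i\vb\}_{i\ge 0}$. The map $p\mapsto p(V)\vb$ identifies $K$ with $\mathbb{F}[t]/(\krymin{V}{\vb})$, sending $\vb$ to $1$ and $V(K)$ to the ideal generated by $t$. That ideal does not contain $1$ once $t\mid\krymin{V}{\vb}$. So the ``kernel-shifted solution'' cannot be assembled from the stored Krylov vectors, even when $t\,\|\,\krymin{V}{\vb}$ and $\vb\in\mathrm{im}\,V$. For example, take $V\ve_2=\ve_1$, $V\ve_1=\bm 0$, $\vb=\ve_1$: then $\krymin{V}{\vb}=t$, and every solution has a nonzero $\ve_2$-coordinate.

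Second, step three rests on a false premise. Constant co-rank bounds $\dim\ker V$, not the generalized kernel $\ker V^N$ or the exponent $j$. A single nilpotent Jordan block has co-rank $1$, a generalized kernel of dimension $N$, and $\krymin{V}{\ve_N}=t^N$. So there is no $O(1)$-dimensional space to brute-force over, and you give no deterministic way to produce a basis of it anyway. The decomposition is also misassigned. Since $V^j g(V)\vb=\krymin{V}{\vb}(V)\vb=\bm 0$, the vector $g(V)\vb$ lies in the nilpotent part of $K$. The component on which $V$ is invertible is $e(V)\vb$, with $e\equiv 1 \bmod g$ and $e\equiv 0\bmod t^j$. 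The nilpotent component is precisely the one whose preimage must come from outside $K$.

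The missing idea is to use a vanishing constant term to extract dependencies rather than solutions. The paper takes $\bm z=(\krymin{V}{\vb}/t)(V)\vb$, which is a nonzero kernel vector. It deletes a column $j$ with $z_j\neq 0$, leaving the column space unchanged. It then runs the same deterministic Wiedemann procedure on the transposed reduced system $C\bm u=\bm{row_n}$. A nonzero constant term writes the last row as a combination of the others; a zero constant term gives a dependency among the remaining rows. In either case the same relation is tested on $\vb$, and that test is the deterministic way to separate ``no solution'' from ``singular but consistent.'' Each round lowers the co-rank by one, so $O(1)$ rounds give a nonsingular square subsystem.

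This route needs products with transposed submatrices of $V$. The algorithm of \cite{BGGKU24} lifts to the integers and uses Chinese remaindering, so it is not a linear straight-line program. The transposition principle therefore does not apply to it as directly as your step one asserts. The paper's \autoref{thm:transposed-mme} instead transposes the linear stages over each $\mathbb{F}_{p_j}$ inside the unchanged lift/CRT wrapper.

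Your step four, by contrast, is a workable alternative to the paper's approach, which computes $r=\deg\mu^{\widetilde{A}Q}-1$ and solves the leading $r\times r$ block through $D_r\widetilde{A}D_r$. It works provided you state the real reason. Once the leading $r\times r$ minor is nonzero, the random diagonal makes the zero eigenvalue of $\tilde V$ semisimple with high probability. Then $\mathbb{F}^N=\ker\tilde V\oplus\mathrm{im}\,\tilde V$ with $\tilde V$ invertible on the image. Hence a consistent right-hand side $T_1\vb$ has a Krylov minimal polynomial with nonzero constant term, and \autoref{prop: min-poly-constant-not-zero} applies directly.
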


Our starting point is Wiedemann’s observation~\cite{w86} that solving a linear system can be reduced to computing the minimal polynomial of an associated Krylov sequence. This leads to an efficient framework whenever matrix-vector multiplication can be performed quickly, even in black-box setting, as is the case for sparse matrices. For nonsingular systems, the recovery of a solution from the Krylov sequence is relatively direct. The singular case is more non-trivial. Wiedemann uses random perturbations (by adding random sparse rows) to transform the system into a nonsingular one and then recovers a solution for the original system from the perturbed instance with high probability. Since the rank is not known apriori, he uses probabilistic rank computation as a subroutine. Kaltofen and Saunders~\cite{KS91} further refined this framework by introducing structured algebraic preconditioners based on Toeplitz matrices, leading to more efficient rank computations while requiring fewer random field elements. We crucially rely on Kaltofen and Saunders technique for handling highly singular systems that plays a crucial role in the general case of interpolation.

Note that most of the ideas we use in our work were already existing in the literature and we only develop the necessary tools to combine them for our purpose.

\section{Proof Overview}
The main ingredient underlying our algorithm is Wiedemann's method for solving
sparse linear systems. Given a matrix $A \in \F^{n \times n}$ and a vector
$\vb \in \F^n$, Wiedemann's approach solves the system $A\vx = \vb$ when $A$ is
sparse. The technique exploits the fact that a sparse matrix-vector product can be computed more efficiently compared to the general one. When $A$ has $O(n)$ nonzero entries, matrix-vector product takes  $O(n)$ operations compared to quadratic time required in general. In fact, the technique extends beyond sparse systems: for
any matrix admitting a matrix-vector product faster than the trivial one,
Wiedemann's method can be employed to solve the associated system.

We first explain how to solve $A\vx = \vb$ in this approach,
assuming $A$ supports fast (say $\widetilde{O}(n)$) matrix-vector
multiplication. Suppose for some univariate polynomial
$p(x) = \sum_{i=0}^{d} p_i x^i \in \F[x]$ of degree $d$ we have $p(A) = 0$.
Multiplying the identity $p(A) = 0$ by $\vb$ gives
\[
  p(A)\vb \;=\; \sum_{i=0}^{d} p_i A^i \vb
  \;=\; p_0 \vb + p_1 A\vb + \cdots + p_d A^d \vb \;=\; \vzero,
\]
which, provided $p_0 \neq 0$, rearranges to
\[
  \vx \;=\; A^{-1}\vb
  \;=\; \frac{-1}{p_0}\left(p_1 \vb + p_2 A\vb + \cdots + p_d A^{d-1}\vb\right).
\]
Thus, once such a $p$ is known, the solution is obtained from $d$ matrix-vector
products with $A$. Such a polynomial always exists: by the Cayley--Hamilton
theorem~\cite[Section 25.5]{gathenGerhard03}, the \emph{characteristic polynomial} of $A$ annihilates $A$ and has degree at most $n$. So a nonzero annihilating polynomial of degree at most $n$ can always be used. Equivalently, the sequence of matrices $\{A^i\}_{i \geq 0}$ satisfies a linear recurrence of order at most $n$: every power $A^k$ is a linear combination of the $n$ preceding powers $A^{k-1}, \ldots, A^{k-n}$, with the coefficients of an annihilating polynomial used for the constants of the combination.

 Any annihilating polynomial may be used in place of $p$, and it is advantageous to use one of the smallest possible degree, since the degree determines the number of matrix-vector products. The unique lowest degree monic polynomial $\mu^A$ that also satisfies the relation $\mu^A(A) = 0$, called the \emph{minimal polynomial}. Since the above relation requires $p_0 \neq 0$, we need to ensure $\matmin{A}(0) \neq 0$. But, it is known that $\matmin{A}(0) \neq 0$ if and only if $A$ is nonsingular. Thus, the above approach applies when $A$ is nonsingular, implying the constant term of the minimal polynomial to be nonzero, which is necessary for it to work. Then, we can simply work with $\mu^A$ instead $p$ for better runtime. For that, we need to explicitly compute the minimal polynomial $\mu^A$ which satisfies the sequence $\{A^i\}_{i \geq 0}$. It turns out, we can avoid computing $\mu^A$ altogether, rather it suffices to work with \emph{Krylov sequence} induced due to the original sequence $\{A^i\}_{i \geq 0}$, whose minimal polynomial already yields a solution for our system. 

The recurrence satisfied by $\{A^i\}_{i \geq 0}$ induces the vector sequence $\krylov{A}{\vb} := \{A^i \vb\}_{i \in \N}$, called the \emph{Krylov sequence} of $A$ at $\vb$. Any polynomial annihilating $\{A^i\}_{i \geq 0}$ also annihilates $\krylov{A}{\vb}$, for every $\vb \in \F^n$. The unique monic polynomial of least degree annihilating $\krylov{A}{\vb}$ is denoted $\krymin{A}{\vb}$ and is called the minimal polynomial of the Krylov sequence; it divides $\matmin{A}$, and its degree is therefore at most $n$. In particular, if $A$ is nonsingular then $\krymin{A}{\vb}$ has nonzero constant term, and the solution $\vx = A^{-1}\vb$ is recovered from $\krymin{A}{\vb}$ by the formula above.

Solving the system thus reduces to computing $\krymin{A}{\vb}$ efficiently.
Note that the characteristic polynomial of $A$ would serve equally well, but
computing it involves computing a determinant, which is not known to be
possible in near-quadratic time for general matrices. Moreover, it need not be
of least degree, and using it may cost more matrix-vector products than
necessary. Fast matrix-vector multiplication will be useful in improving the overall runtime and more crucially to compute the minimal polynomial efficiently, which we explain below.

For a linear recurrence over \emph{scalars}, the minimal polynomial
is recovered by the Berlekamp--Massey algorithm~\cite[Algorithm 12.9]{gathenGerhard03}: given an upper bound on
the degree of the recurrence together with sufficiently many of its initial
terms, it outputs the minimal polynomial in near-linear time by solving a structure linear system. The sequence $\krylov{A}{\vb}$, however, is \emph{vector-valued} rather than \emph{scalar-valued}. We describe next how Wiedemann's method reduces the vector case to the scalar one and what we do instead in order to obtain a deterministic algorithm, and how the singular case is handled eventually.

\paragraph*{Nonsingular system.}
Wiedemann's randomized approach converts $\krylov{A}{\vb}$ into the scalar
sequence $\{\vu^\top A^i \vb\}_{i \geq 0}$ for a random vector
$\vu \in \F^n$; with high probability the minimal polynomial of this scalar
sequence coincides with $\krymin{A}{\vb}$. In this work we use a deterministic
variant instead. For each $j \in [n]$, consider the coordinate sequence
$s_j(i) = \ve_j^\top A^i \vb$ and let $\mu_j$ denote its minimal polynomial.
A polynomial annihilates $\krylov{A}{\vb}$ if and only if it annihilates every
coordinate sequence, whence
\[
  \krymin{A}{\vb}(x) \;=\; \operatorname{lcm}\left(\mu_1(x), \ldots, \mu_n(x)\right).
\]
Accordingly, after computing the vectors $\{A^i \vb\}_{i=0}^{2n-1}$, we apply
Berlekamp--Massey independently to each of the $n$ coordinate sequences and
recover $\krymin{A}{\vb}$ deterministically in near-quadratic time.

In our setting, the matrix $A$ is a multivariate Vandermonde matrix arising
from multivariate interpolation. A recent result of Bhargava et
al.~\cite{BGGKU24}, the culmination of a series of works on multivariate
multipoint evaluation~\cite{umans08, KU11, BGKM23}, shows that matrix-vector
multiplication by such matrices can be performed in near-linear time over every
finite field. Wiedemann's framework therefore yields a near-quadratic algorithm
for multivariate interpolation.

\paragraph*{Singular system.}

We now revisit Wiedemann's method for the case when the matrix $A$ is not full rank. Since $A^{-1}$ may not exist, the standard derivation does not apply directly. Moreover, a first order problem seems like detecting the non-invertibility of the system, which is already computationally equivalent to computing the determinant of the matrix, that may take \(n^{\omega}\). But, it turns out that we can handle the non-invertible case with some refined ideas. Recall that the constant term of the characteristic polynomial of $A$ is, up to sign, equal to $\det(A)$. Since the minimal polynomial of the Krylov sequence $\{A^i\vb\}_{i\ge 0}$, denoted by $\krymin{A}{\vb}$, divides the characteristic polynomial of $A$, it follows that $\det(A)\neq 0$ implies $\krymin{A}{\vb}(0)\neq 0$. Thus, we first compute $\krymin{A}{\vb}$ and examine its constant term. The converse, however, need not hold: it is possible for $A$ to be singular while $\krymin{A}{\vb}(0)\neq 0$. To handle this scenario, we observe that whenever the minimal polynomial of the Krylov sequence has a nonzero constant term, the vector \(\vb\) lies in the column space of \(A\). Consequently, computing a solution using the minimal polynomial of Krylov sequence always yields a valid solution (possibly one among many), even when the matrix is not invertible. Since the resulting vector can be verified directly, no additional certificate of invertibility is required. Therefore, whenever $\krymin{A}{\vb}(0)\neq 0$, the standard Wiedemann construction yields a valid solution to $A\vx=\vb$, even if $A$ is singular.

Thus, the singular case becomes more non-trivial when $\krymin{A}{\vb}(0)=0$. In this case, it is possible to show that, the minimal polynomial yields a nontrivial linear dependence among the columns of $A$, allowing us to remove one redundant column. Removing a column produces a rectangular system assuming we started with a square system, and to identify a redundant row, we apply the same procedure recursively to the transpose of the reduced system. Either the resulting subsystem is solvable which implies that the corresponding row is redundant, or the associated Krylov minimal polynomial again has a vanishing constant term that yields another dependence relation. Repeating this process recursively removes one redundant row and one redundant column at each step.

When the co-rank is constant, only constantly many recursive steps are needed to obtain a square subsystem that has full rank, which can be solved directly using the nonsingular Wiedemann framework.  Solving this subsystem gives a solution to the original system after restoring the free variables. For larger co-rank, we use a different approach from \cite{KS91}. We use randomized preconditioning by triangular Toeplitz matrices, so that the leading principal minors remain nonsingular up to the rank of the matrix, with high probability. This reduces the singular system to solving a full-rank principal subsystem along with the final consistency check. We show that the overall approach can be done in near-quadratic time.

Note that, throughout the procedure, no matrix entries are modified. We only remove redundant rows and columns in order to obtain a nonsingular subsystem.

\section{Related Work}

There is a large body of work on polynomial interpolation, as it is a fundamental algebraic problem that arises throughout computational mathematics, coding theory, algebraic complexity, symbolic computation, approximation theory, and many other areas. The problem has been studied extensively by both mathematicians and computer scientists~\cite{ChungYao1977,deBoorRon1990, Lorentz1992,GS00}. The works on multivariate interpolation can roughly be divided into several broad paradigms, depending on the structure of the evaluation points, the polynomial being recovered, or the way the polynomial is accessed.

\medskip

\noindent\textbf{Structured evaluation points.}
When the evaluation points have a product set (grid) structure, multivariate interpolation reduces recursively to a sequence of univariate interpolation problems. Since the univariate interpolation at $d$ many points can be performed in $\widetilde{O}(d)$ arithmetic operations using fast polynomial arithmetic \cite{BM74}, multivariate interpolation on an $m$-dimensional grid of size $d^m$ also admits a near-linear time $\widetilde{O}{(d^m)}$ algorithm using induction on the number of variables. More generally, when the evaluation points form a downward-closed (lower) set, interpolation spaces admit structured Newton-type bases and divided-difference constructions~\cite{DynFloater2014, N19}, and recent works have developed fast algorithms for interpolation in such spaces~\cite{Hofmann2025}. A common framework of all these works is that the geometry of the evaluation points is fixed in advance. In contrast, we make no assumption whatsoever on the structure of the input points.

\medskip
\noindent\textbf{Interpolation with black-box access.}
Another common approach treats the polynomial as a black box that the algorithm can evaluate at points of its choice. In the sparse setting, the goal is to recover the polynomial's support and coefficients by accessing the black box evaluations. Zippel~\cite{Zippel1979,Zippel1990} gave probabilistic algorithms in this model. Ben-Or and Tiwari~\cite{BenOrTiwari1988} gave the first deterministic sparse interpolation algorithm which relies on Prony's technique using linear recurrences. Subsequent works also exploited Prony's method~\cite{GiebrechtLabahnLee2002}, and improved complexity using fast Toeplitz and Vandermonde solvers~\cite{KaltofenYagati1988}, and further removed assumptions on the degree and sparsity~\cite{KaltofenLee2003}. Numerous variants of sparse polynomial interpolation based on Kronecker substitution, derivative evaluations, finite coefficient sets, and arithmetic circuit representations~\cite{GrigorievKarpinskiSinger1990,BorodinTiwari1991,GrigorievKarpinski1993,Werther1994,Man95,GargSchost09,BlaserJindal2014,ArnoldRoche14,ArnoldGiesroche16,HuangGao2017,HuangGao2020,GiorgiGrePerrRoc22,Huang2023,Grenet2023,HoevenLec25} have also been studied in the literature. Some of these techniques also apply to dense interpolation, for example the framework of Zippel~\cite{Zippel1990}. However, a common feature of all these works is that the algorithm is free to choose evaluation points adaptively, unlike in our setting.

\medskip
\noindent\textbf{Hermite interpolation and ideal-theoretic methods.} 
Another line of work considers Hermite interpolation, where both function values and derivative values are given~\cite{SauerXu1995b, Lorentz1992}. Van der Hoeven and Schost~\cite{vanderHoevenSchost2013} achieve near-linear complexity in the number of Hermite conditions for multivariate interpolation with derivative constraints. Other algebraic approaches use vanishing ideals and Gröbner bases~\cite{MollerBuchberger1982, MarinariMollerMora1993, Sauer2001} to characterize interpolation spaces and its uniqueness. However, these methods typically involve Gröbner basis computations or large linear-algebra steps, which are typically costly in higher dimensions.

\medskip
\noindent\textbf{The fixed-point interpolation setting.} In our interpolation set up, neither we assume evaluation query access nor any geometric assumptions on the input points. The most closely related work in this setting is that of Zilić and Vranešić~\cite{ZilicVranesic2002}. They construct an interpolation basis consisting of $t$ monomials whose associated evaluation matrix is nonsingular, by iteratively replacing monomials guided by the nullspace structure of the generalized Vandermonde matrix. Their algorithm is deterministic, works over arbitrary fields, and requires $O(t^{\omega+1})$ arithmetic operations using fast matrix multiplication. In contrast, we fix in advance the interpolation space consisting of all monomials of individual degree less than $d$  and study the resulting multivariate Vandermonde system directly, including the case where it is rank deficient.

We also note that there is a deterministic version for solving invertible systems in Wiedemann's original result \cite{w86}~(Algorithm $2$) which is very close to what we do and rediscovered independently during the course of this work. In our approach, we apply the Berlekamp-Massey for each coordinate in parallel unlike the Wiedemann algorithm, but nevertheless both are in similar spirit. We suspect Wiedemann's deterministic approach is not commonly used because the space complexity is not linear; rather, it becomes quadratic in the input size, and the whole black-box setup loses much of its appeal, since in quadratic space we can store the entire input matrix as a whitebox by figuring out all its columns, which defeats its black-box nature of system solving. In our work, we do not stress the black-box setting since we know the monomial basis and the input points, and can explicitly determine each entry of the input system. Since we are not in the black-box framework, we can trade quadratic space for determinism. A recent work of Liu, Nguyen and Yang \cite{LiuNguyenYang26} discusses the time and space tradeoff in the setting of rational general system solving.

To the best of our knowledge, we are not aware of any prior work obtaining near-quadratic algorithms for dense multivariate interpolation over finite fields at arbitrary fixed evaluation points without assuming sparsity, black-box access, or special structure on the input points.

\section{Organization}

\autoref{sec:mat-vec-complexity} and \autoref{subsec:transposed-mme} discusses the complexity of computing matrix-vector product and the transposed matrix-vector product respectively when the matrix is a multivariate Vandermonde. \autoref{sec:det-krylov} computes shows the deterministic Wiedemann's computation and  then the interpolation system solving is addressed in \autoref{sec:nonsingular} and \autoref{sec:singular} when the system is nonsingular and singular respectively.

\section{Preliminaries and Notation}\label{sec:prelim}
Throughout the paper, let \(\F\) denote a finite field, $\mathbb{F}^*= \F\setminus \{0\}$, let \(\N\) denote the set of natural numbers including $0$, and let \([n]:=\{1,2,\dots,n\}\). The soft-\(O\) notation \(\widetilde{O}(m)\) denotes \(O(m\log^c m)\), where \(c\) is a constant.

\paragraph{Linear recurrences.}
A sequence $(a_i)_{i\ge 0}$ over $\F$ satisfies a linear recurrence of order $d$ if there exists an with \emph{annihilating polynomial} $c(x)=x^d+c_{d-1}x^{d-1}+\cdots+c_0\in\F[x]$ such that $\sum_{j=0}^{d}c_j a_{i+j}=0$ for all $i\ge 0$ (with $c_d=1$). The unique monic annihilating polynomial of minimum degree is called the \emph{minimal polynomial} of the sequence. 

\paragraph{Matrix polynomials.}
For $A\in\F^{n\times n}$, let $\chi_A(x):=\det(xI-A)$ denote the characteristic polynomial and $\matmin{A}(x)$ the minimal polynomial, i.e., the unique monic polynomial of least degree satisfying $\matmin{A}(A)=0$. By the Cayley--Hamilton theorem, $\matmin{A}(x)\mid\chi_A(x)$.

\paragraph{Krylov sequences.}
For $A\in\F^{n\times n}$ and $\vv\in\F^n$, the \emph{Krylov sequence} $\krylov{A}{\vv}:=\{A^i\vv\}_{i\in\mathbb{N}}$ has minimal polynomial denoted $\krymin{A}{\vv}(x)$. Since $\matmin{A}(A)=0$, we have $\krymin{A}{\vv}(x)\mid\matmin{A}(x)$.

\paragraph{Berlekamp--Massey.}

The Berlekamp--Massey algorithm~(Chapter~12, \cite{gathenGerhard03}) takes as input the first \(2d\) terms of a scalar linear recurrence of degree at most \(d\) and computes its minimal polynomial in \(O(d\log^2 d\log\log d)=\widetilde{O}(d)\) field operations.

\medskip
A concrete exposition on all these concepts can be found in the book~\cite[Chapter 12]{gathenGerhard03}.

\section{Main Results}

We first develop the necessary tools that will be useful later to prove our main result.

\subsection{Matrix-vector multiplication complexity}\label{sec:mat-vec-complexity}
For a matrix $A \in \F^{n \times n}$, let $M_A(n)$ denote the time required to compute the \emph{matrix-vector product:} $A\vv$ for an arbitrary vector $\vv\in\F^n$. When the matrix is clear from context, we simply write $M(n)$ for convenience. The following lemmas obtain complexity for matrix-vector multiplication by multivariate Vandermonde matrices and their submatrices.

\begin{lemma}\label{lem:vandermonde-matvec}
Let $V \in \F^{N \times N}$ be a multivariate Vandermonde matrix corresponding to evaluations of $m$-variate polynomials of individual degree less than $d$ over $\F$, where $N=d^m$. Then, for any submatrix $B$ of $V$,
\[
M_B(N)=N^{1+o(1)}\poly(m,d,\log |\F|),
\]
assuming $d=\omega(1)$.
\end{lemma}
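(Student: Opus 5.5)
The plan is to reduce the product $B\vv$, for an arbitrary submatrix $B$ of $V$, to a single full-size product with $V$, and then to invoke the multivariate multipoint evaluation result of Bhargava, Ghosh, Guo, Kumar and Umans~\cite{BGGKU24}.

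First consider $B=V$. Index the columns of $V$ by the monomials $\vy^{\bm e}$ with $\bm e\in\{0,\ldots,d-1\}^m$ and the rows by the points $\bm\alpha_1,\ldots,\bm\alpha_N$. For $\vv\in\F^N$, form the polynomial
\[
f_{\vv}(\vy)=\sum_{\bm e} v_{\bm e}\,\vy^{\bm e},
\]
which has individual degree less than $d$. Then $(V\vv)_i=f_{\vv}(\bm\alpha_i)$, so computing $V\vv$ is exactly multivariate multipoint evaluation of $f_{\vv}$ at the $N$ input points. Writing down $f_{\vv}$ from $\vv$ takes only $O(N)$ time. The algorithm of~\cite{BGGKU24} evaluates an $m$-variate polynomial of individual degree less than $d$ at $N$ points of $\F^m$, over any finite field, in time $(d^m+N)^{1+o(1)}\poly(m,d,\log|\F|)$. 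The $o(1)$ in the exponent is where the hypothesis $d=\omega(1)$ is needed. I would check that this is precisely the regime in which their bound is stated; if necessary, I would absorb their dependence on $m$ and $\log|\F|$ into the $\poly(m,d,\log|\F|)$ factor. With $N=d^m$ this gives $N^{1+o(1)}\poly(m,d,\log|\F|)$.

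For a general submatrix $B$, with row set $R\subseteq[N]$ and column set $C$ a set of monomials, and for $\vv\in\F^{|C|}$, I pad $\vv$ with zeros to obtain $\tilde\vv\in\F^N$ supported on $C$. Then $B\vv$ is the restriction of $V\tilde\vv$ to the coordinates in $R$. Equivalently, I evaluate $f_{\tilde\vv}$ only at the points $\{\bm\alpha_i\}_{i\in R}$, which is no more expensive. The padding and the restriction each cost $O(N)$, so $M_B(N)\le M_V(N)+O(N)$, and the bound follows.

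The only delicate step is matching the exact form of the complexity bound in~\cite{BGGKU24}: their parameter regimes, any preprocessing of the field representation, and how the $\log|\F|$ dependence appears. The aim there is to show that it fits the stated form $N^{1+o(1)}\poly(m,d,\log|\F|)$ under the assumption $d=\omega(1)$. Everything else is bookkeeping.
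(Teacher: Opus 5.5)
Your proposal is correct and takes essentially the same route as the paper. You identify $B\vv$ with (restricted) multivariate multipoint evaluation of the polynomial whose coefficient vector is $\vv$, invoke the $(d^m+N)^{1+o(1)}\poly(m,d,\log|\F|)$ bound of~\cite{BGGKU24}, and simplify using $N=d^m$. Your explicit zero-padding of $\vv$ and restriction of the output to the rows in $R$ is just a more concrete version of the paper's remark that the same algorithm applies to any restriction of the points or monomials.
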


\begin{proof}

Any submatrix \(B\) of \(V\) is determined by a subset \(S \subseteq N\) of the evaluation points and a subset \(\mathcal{M}\) of the monomials from all possible $d^m$ monomial basis. For a vector \(\vv=(v_m)_{m\in \mathcal{M}}\), let
\[
f(\vx)=\sum_{m\in \mathcal{M}} v_m\,m(\vx).
\]
Then \(B\vv=f(\bm\alpha)_{\bm\alpha\in S}\). Hence, computing \(B\vv\) is precisely a restricted multivariate multipoint evaluation problem over the point set \(S\). In this setting, Bhargava et al.~\cite{BGGKU24} gave a deterministic near-linear time algorithm for multivariate multipoint evaluation over finite fields,
\[
(d^m+N)^{1+o(1)}\poly(m,d,\log |\F|).
\]
Since $N=d^m$, this simplifies to $N^{1+o(1)}\poly(m,d,\log |\F|)$. The same algorithm applies to any restriction of the evaluation points or monomials, yielding the stated bound.
\end{proof}

\begin{proposition}[\cite{BCS97}, 13.20] \label{prop:transposition principle}
Let \(A\) be an \(n \times n\) matrix and let \(\vv \in \F^n\). Suppose there exists an algebraic algorithm\footnote{Throughout, \emph{algebraic} algorithms refers to the arithmetic model, which access $\mathbb{F}_q$ only through field operations whereas \emph{non-algebraic} algorithms may additionally exploit the bit-representation of $\mathbb{F}_q$ and studied in the bit-complexity
model.} that computes the map \(\vv \mapsto A\vv\) in time \(O(n)\). Then the map \(\vv \mapsto A^{T}\vv\) can also be computed in time \(O(n)\).
\end{proposition}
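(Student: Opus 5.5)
The plan is to prove the transposition principle through the standard straight-line program argument, via reverse-mode differentiation (Baur--Strassen style). First, I will model the algebraic algorithm computing $\vv \mapsto A\vv$ as an arithmetic circuit $C$ of size $L = O(n)$. Its inputs are $v_1,\ldots,v_n$ and its outputs are $(A\vv)_1,\ldots,(A\vv)_n$. Since the map is linear in $\vv$, I will first normalize $C$ into a \emph{linear} circuit in the variables $\vv$. In such a circuit every gate computes either $c\cdot g$ for a constant $c\in\F$ (a constant depending only on $A$), or $g+h$. A general circuit may contain products of two $\vv$-dependent gates. To remove them, I will take the homogeneous degree-one components, in the style of Strassen's homogenization: for each gate, keep the constant and linear parts in $\vv$. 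The linear part of a product $g\cdot h$ is $g_0 h_1 + g_1 h_0$, where $g_0,h_0$ are the constant parts. The constant parts depend only on $A$, so they can be precomputed by a subcircuit that does not depend on $\vv$. This blows up the size by only a constant factor. Divisions are handled the same way, as is standard over infinite fields; over a finite field the running algorithm of \autoref{lem:vandermonde-matvec} is division-free in the relevant sense, or divisions are only by constants.

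Next, I will view the linear circuit as a directed acyclic graph $G$ with edge weights in $\F$. Each edge carries the scalar by which the source value is multiplied before being summed into the target. Then $A_{ij}$ equals the sum, over all directed paths from input $v_j$ to output $i$, of the product of the edge weights along the path. This path-sum identity follows by induction on the topological order.

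Then I will form the transposed graph $G^T$ by reversing every edge while keeping its weight, and by swapping the roles of inputs and outputs. The path sums from input $i$ to output $j$ in $G^T$ are exactly $A_{ij}$, so $G^T$ computes $\vv\mapsto A^T\vv$. Each node of $G^T$ sums its incoming weighted values. A node with in-degree $k$ therefore costs $k-1$ additions, and each edge costs one scalar multiplication. The total cost is $O(|E(G)|) = O(L)$, up to an additive $O(n)$ that accounts for input nodes of high fan-out, which become additions. Since the gates have fan-in two, $|E|\le 2L$, and the bound $O(n)$ follows.

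I expect the main obstacle to be the linearization step, that is, justifying that a general algebraic algorithm can be replaced by a linear circuit of comparable size. The graph reversal itself is routine. For the linearization I would rely on the classical homogenization argument in \cite{BCS97}. The point to take care over is that the constants involving $A$ are computed once and never transposed, so they can be absorbed as edge weights.
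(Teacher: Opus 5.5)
\textbf{There is a genuine gap, in the linearization step.} Your graph-reversal argument is correct. It is the standard proof of the cited transposition theorem (\cite{BCS97}, 13.20), which is a statement about \emph{linear} straight-line programs (gates $\lambda g$ and $g\pm h$). The path-sum identity, edge reversal and in-degree count give a transposed program of size $O(L+n)$.

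The step fails in the paper's setting, where $\F$ is finite and the algorithm need only compute $\vv\mapsto A\vv$ as a function on $\F^n$. Such an algorithm need not compute $A\vv$ as a formal rational function. For example, precompose a linear circuit for $A$ with $v_j\mapsto v_j\cdot v_j$ over $\F_2$ (or $v_j\mapsto v_j^q$ by repeated squaring over $\F_q$). The result still computes $A\vv$ on all of $\F^n$, with only $O(n)$ extra gates over $\F_2$. But $(a_j+w_j)^q=a_j^q+w_j^q$ in characteristic $p$, so the degree-one part of $v_j^q$ at every expansion point is zero. Your homogenized circuit therefore computes the zero map, and reversing it gives a circuit for the zero map rather than for $A^T$.

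The infinite-field hypothesis in Strassen-style homogenization is exactly what lets one pass from computing the map to computing it formally, and it also supplies a point where no denominator vanishes. Over $\F_q$ nothing replaces it. Over $\F_2$ your step amounts to claiming that AND gates never help Boolean circuits compute linear maps by more than a constant factor, which no homogenization argument can deliver. If the model allows zero tests, branching is also not handled.

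Your fallback does not repair this. A general proposition cannot be justified by inspecting one particular algorithm. Moreover, the algorithm behind \autoref{lem:vandermonde-matvec} is, as the paper stresses, \emph{non-algebraic} (integer lifting, reduction modulo small primes, Chinese remaindering), so it lies outside the proposition's scope altogether. That is precisely why the paper never applies the proposition to $V$ itself. In \autoref{thm:transposed-mme} it is used only for the factors $H_j,E_j,L_j$, which are already linear circuits over $\F_{p_j}$.

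The fix is to take the hypothesis to be a linear straight-line program of size $L$ computing $A$, as in \cite{BCS97}. The linearization step then disappears, and your reversal argument alone proves the statement. More generally, your homogenization is valid over any field for a straight-line program that computes $A\vv$ \emph{formally}, as elements of $\F(v_1,\ldots,v_n)$, and is defined at some point of $\F^n$ (for instance $\vzero$), expanding at that point. What cannot be assumed over a finite field is the passage from ``computes the map'' to ``computes it formally''.
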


The above proposition is also known as Tellegen's principle~\cite{T52}. It has been useful in the context of transposition and has been studied extensively, including in the works of Kaminski, Kirkpatrick, and Bshouty~\cite{KKB88}, Bostan, Lecerf, and Schost~\cite{BLS03}, Kedlaya and Umans~\cite{KU11}, and Kinoshita and Li~\cite{KL24}.

\subsection{Transposed multivariate multipoint evaluation}
\label{subsec:transposed-mme}

In the previous section, we described the matrix-vector multiplication complexity for multivariate Vandermonde and their submatrices, that is fast multivariate multipoint evaluation restricted to particular coordinates. In addition to that, we also need the transpose of multivariate multipoint evaluation to be fast. The transpose problem is defined as follows. Given evaluation points and a weight vector, compute all weighted power sums of the evaluation points. Note by \autoref{prop:transposition principle}, if we have an algorithm that can compute the forward $A\vv$ fast, then the transposed $A^{T}\vv$ can also be achieved in roughly same complexity, and the only criteria is the algorithm should be algebraic. Though, we know algebraic versions of multivariate multipoint evaluation due to multiple works~\cite{umans08, BGKM23}, which can lead to fast transposed multipoint evaluation, however these algorithms come with some restrictions on input parameters. 

Thus, in order to get rid of those constraints, we work with the alternative \emph{non-algebraic} version of Bhargava et al.~\cite{BGGKU24} which works over all finite fields for growing degree parameter. For that reason, the transposition principle for linear
straight-line programs do not apply to it directly. We therefore segregate 
their algorithm into different stages and analyze them and finally conclude that; although \cite{BGGKU24} is non-algebraic, the transposed multivariate multipoint evaluation runs in roughly same time as their multivariate multipoint evaluation. Note this is the same route by which Kedlaya
and Umans~\cite[Section 7]{KU11} obtain the transpose of their non-algebraic algorithm and we also follow a similar footstep.

\begin{theorem} \label{thm:transposed-mme} Let $\F$ be a finite field. Given evaluation points
$\bm a_1, \ldots, \bm a_N \in \F^m$ and a weight vector $y \in \F^N$, there is deterministic algorithm that computes all the weighted power sums
\[
  S_{\bm e}(y) \;:=\; \sum_{i=1}^{N} y_i\, \bm a_i^{\bm e}
  \qquad \text{for all } \bm e \in \intbr{d}^m,
\]
in time $(d^m + N)^{1+o(1)} \cdot \mathrm{poly}(m, d, \log|\F|)$, for all
$m \in \mathbb{N}$ and all sufficiently large $d \in \mathbb{N}$.

\medskip

Equivalently, writing $V = \big(\bm a_i^{\bm e}\big)_{i \in [N],\, \bm e \in \intbr{d}^m} \in \F^{N \times d^m}$ for the multivariate Vandermonde matrix, the map
$y \mapsto V^{\top} y$ can be computed using a deterministic algorithm that runs in same time as mentioned above.
\end{theorem}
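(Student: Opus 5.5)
We follow the route of Kedlaya and Umans~\cite[Section 7]{KU11}. We decompose the multipoint evaluation algorithm of Bhargava et al.~\cite{BGGKU24} into stages. Each stage is either (i) an \emph{algebraic} linear map in the input coefficients (or the weights), to which \autoref{prop:transposition principle} applies stage by stage, or (ii) a non-algebraic step that only acts on the \emph{points} $\bm a_i$. Type (ii) steps are independent of the weight vector $y$, so they need not be transposed.

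Concretely, the forward algorithm of \cite{BGGKU24} on input $f$ and points $\bm a_i$ runs as follows. First, lift $f$ and the points from $\F=\mathbb{F}_q$ to a ring of integers (or to $\mathbb{Z}[X]/(h)$ for extension fields) with small nonnegative representatives. Then, for each of several small primes $p_j$ (the Chinese remaindering / multimodular step), reduce modulo $p_j$ and solve the smaller instances recursively. Finally, combine by CRT and reduce back modulo $q$. Alternatively, at the base, evaluate on a product grid via fast multidimensional evaluation.

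The key observation is that, once the points are fixed, the overall map $\hat f\mapsto (f(\bm a_i))_i$ factors as a composition
\[
\pi_q\circ \mathrm{CRT}\circ \bigoplus_j E_j \circ \rho ,
\]
where:
\begin{itemize}
\item $\rho$ lifts and reduces the coefficients;
\item $E_j$ is evaluation of a polynomial over the small field $\mathbb{F}_{p_j}$ (or its extension) at the reduced points $\bm a_i \bmod p_j$;
\item the outer maps recover the true integer value from its residues.
\end{itemize}
The difficulty is that the lifting and CRT reconstruction are not $\F$-linear. For the transpose we therefore do not transpose the integer pipeline literally. Instead, we use it only to reduce $V^\top y$ to \emph{product-grid} transposed evaluations, which are algebraic.

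So the plan is to mirror Kedlaya--Umans: the nonlinearity is confined to computing the points of a structured grid that the evaluation problem is reduced to. In \cite{BGGKU24} (as in \cite{KU11}) the points $\bm a_i$ are mapped, independently of $f$, to points $\bm b_i$ in some product set $S^{m'}$ over an extension ring, together with a polynomial substitution $f\mapsto \tilde f$ that is $\F$-linear in the coefficients of $f$. The evaluation is then $f(\bm a_i)=\tilde f(\bm b_i)$ read off from a grid. Thus
\[
V = P\cdot G\cdot L,
\]
where:
\begin{itemize}
\item $L$ is the linear coefficient substitution (Kronecker-type or multimodular reduction), which is algebraic and fast;
\item $G$ is grid evaluation, which is fast algebraically;
\item $P$ is a selection/projection determined by the points, computed non-algebraically, once, in the stated time.
\end{itemize}
Then $V^\top=L^\top G^\top P^\top$. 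Here $P^\top$ scatters $y$ onto the grid, $G^\top$ is transposed grid evaluation (fast via Tellegen, or directly via transposed univariate multipoint evaluation along each coordinate), and $L^\top$ is fast by \autoref{prop:transposition principle} applied to the straight-line program for $L$. The recursion across levels of \cite{BGGKU24} transposes level by level in the same way, with identical cost recurrences, giving $(d^m+N)^{1+o(1)}\poly(m,d,\log|\F|)$.

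The main obstacle is verifying that the reduction in \cite{BGGKU24} really has this shape. The concern is the step where integers are lifted and values are recovered from residues. For the evaluation to be $\F$-linear in $f$, the recovery must be written as the evaluation of a fixed $\F$-linear combination rather than as a genuine integer CRT reconstruction of $f$'s values. I would first try to handle this as in \cite[Section 7]{KU11}. There the integer-lifting pathway is used only to prove that the points and the substitution exist; the final map is recast as an $\F$-linear composition of evaluations over extension rings of $\F$, combined via the ring-theoretic CRT (which is linear). If that recasting fails for some stage, the fallback is to treat that stage's linear map as an explicit structured matrix (block diagonal over moduli) and transpose it by hand.
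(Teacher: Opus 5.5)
There is a genuine gap, and it is exactly the step you flag as the main obstacle. Your plan needs the \cite{BGGKU24} evaluation to factor as $V=P\cdot G\cdot L$ with every stage $\F$-linear. You hope to get this by recasting the multimodular step as an $\F$-linear ring-theoretic CRT over extension rings of $\F$, but no such recasting exists.

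The multimodular step works as follows:
\begin{enumerate}
\item it lifts the data to $\mathbb{Z}$ (or $\mathbb{Z}[Z]/(h)$ for non-prime $\F$);
\item it computes modulo small primes $p_j\neq p=\operatorname{char}\F$;
\item it reconstructs an integer from the residues, and only then reduces mod $p$.
\end{enumerate}
Each of these stages fails to be $\F$-linear. The residue spaces over $\F_{p_j}$ admit no nonzero $\F_p$-vector-space structure, since multiplication by $p$ is invertible there. The CRT isomorphism $\mathbb{Z}/M\cong\prod_j\F_{p_j}$, with $M=\prod_j p_j$, is linear over $\mathbb{Z}/M$, not over $\F$. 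The final map ``least representative mod $M$, then reduce mod $p$'' is not even additive. So this part of the algorithm is not an $\F$-linear straight-line program and cannot be made into one stage by stage. The same objection applies to your fallback of transposing a block-diagonal-over-moduli matrix, which is linear over $\mathbb{Z}/M$, not over $\F$. Note also that this step acts on the coefficient data (on $y$, in the transposed problem), not only on the points, so it falls under neither your type (i) nor your type (ii).

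The missing idea is that this wrapper never has to be transposed, because it does not depend on the matrix. Write $\sigma$ for the lift and $\pi$ for reduction mod $p$. Since $\pi$ is a ring homomorphism and $\pi(\widetilde V)=V$ entrywise, you get
\[
V^\top y=\pi\bigl(\widetilde V^{\top}\sigma(y)\bigr).
\]
So you run the identical lift--reduce--CRT procedure on the integer matrix $\widetilde V^{\top}$. The only change is the magnitude bound: each output $\sum_i\tilde y_i\tilde{\bm a}_i^{\bm e}$ is a sum of $N$ rather than $d^m$ terms, and the prime-selection step of \cite{BGGKU24} still goes through.

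What remains are the transposed inner problems $V_j^{\top}$ over the small fields $\F_{p_j}$, and only there do you need Tellegen. In \cite{BGGKU24} the inner map is a linear circuit $V_j=L_jE_jH_j$: Hasse derivatives, then evaluation on the Kakeya set viewed as a union of product sets, then Hermite interpolation along curves through the reduced points. Hence $V_j^{\top}=H_j^{\top}E_j^{\top}L_j^{\top}$ at essentially the same cost. This is the paper's argument.

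A smaller point: treating your inner stage $P$ as a selection from a full grid describes the base case of \cite{KU11}, not that of \cite{BGGKU24}. There the last stage is a point-dependent sparse linear combination. This is harmless for transposition, but the inner algorithm should be described correctly.
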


\begin{proof}
We use the first algorithm (Algorithm $1$) of~\cite{BGGKU24}, the one valid for homogeneous polynomials but can be easily extended to non-homogeneous setting by working over all the individual homogeneous components separately. Thus, we omit the homogeneity concern, since it can have at most $dm$ many homogeneous components and each can be transposed efficiently. \medskip

Their multipoint evaluation for an $m$-variate polynomial $f$ of individual degree less than $d$, where the coefficients of $f$ denoted by $x \in \F^{d^m}$ is given as follows. The algorithm computes $Vx = \pi\big(\widetilde V\, \sigma(x)\big)$, where $\sigma$ lifts field elements to their integer representatives, $\widetilde V$ is the lifted integer Vandermonde matrix, satisfying
$\pi(\widetilde V) = V$ entry wise, and $\pi$ is the reduction back to $\F$.
 For the integer computation, the entries of $\widetilde V\, \sigma(x)$ are recovered using Chinese remaindering from their residues modulo small primes $p_1, \ldots, p_k$ whose product exceeds the apriori evaluation bound. Then the problem reduces to evaluating  $f_j(\bm a_{ij}) :\equiv \widetilde F(\widetilde{\bm a_i}) \bmod p_j$ at the reduced points over $\F_{p_j}$ for each $j$, where $\widetilde F$ is the integer lift of $f$, $f_j:= \widetilde F \mod p_j$ and $\bm a_{ij} \equiv \widetilde{a_i} \mod p_j$. 

Over $\F_{p_j}$, the evaluation is performed on the Kakeya set
$\mathcal{K} \subseteq \F_{p_j}^m$, a union of $p_j$ many product sets.  Every
point of $\F_{p_j}^m$ lies on a low-degree curve intersecting $\mathcal{K}$ at sufficiently many points. Thus, for each prime field $\F_{p_j}$, the evaluation turns into 
\[
  V_j \;=\; L_j\, E_j\, H_j,
\]
 three stages as follows. The map $H_j$ sends the coefficient vector
of $f_j$ to the stacked coefficient vectors of its Hasse derivatives of
order at most $2m$; by \cite[Proposition~3.5]{BGGKU24}, each output entry is a fixed
binomial coefficient times a single input entry. The
map $E_j$ sends these coefficient vectors to their values on $\mathcal{K}$. Since $\mathcal{K}$ is a union of product sets, we apply univariate multipoint evaluation on each of them, and which is again a linear circuit for fixed evaluation points. The map $L_j$ sends the values on $\mathcal{K}$ to the outputs $f_j(\bm a_{i,j})$; its $i$-th row is fixed linear combination of the values supported on the intersection points of $\mathcal{K}$ lying on the curve through $\bm a_{i,j}$, with Hermite interpolation weights determined by that curve. Each of $H_j, E_j,L_j$ is a linear arithmetic circuit over $\F_{p_j}$. We now consider the transpose problem. 
\[
  V^{\top} y \;=\; \pi\big(\widetilde V^{\top}\, \sigma(y)\big),
\]
so the transposed algorithm lifts $y$, computes
$\widetilde V^{\top} \sigma(y)$ modulo the primes $p_j$, recombines by the Chinese Remainder Theorem, and reduces. The outer reduction that uses the lift-reduce-CRT mechanism does not depend on the matrix it operates on, hence it applies verbatim to $\widetilde V^\top$, and the transpose inherits the outer reduction at no cost from the forward problem. In short, the lift-reduce-CRT mechanism is common to both the forward and the transpose problem, hence they do not affect the overall run time for the transpose case. A subtle point to note that, in the transpose setting, the bound for the choice of primes is slightly different though it does not affect the overall picture. Each transposed output $\sum_i \tilde y_i\, \tilde{\bm a}_i^{\bm e}$ is a sum of $N$ terms rather than $d^m$, hence bounded by $N(p-1)^{dm}$ rather than $d^m(p-1)^{dm}$, and the primes are chosen with $\prod_j p_j > N(p-1)^{dm}$. Since $N=d^m$ in our case it does not affect us. Even if they are not equal, replacing $m \log d$ by $\log N$ changes the bound on the primes negligibly, since both are at most the logarithm of the input size. So \cite[Lemma~4.3]{BGGKU24} applies as before, and the divisibility condition needed for the Kakeya construction is still met by enough primes, with the extra cost absorbed by the low-order term, i.e., $\poly(m,d,\log|\F|)$.

It now remains to transpose the inner matrix $\widetilde V_j$. Its three factors $H_j, E_j, L_j$ are linear arithmetic circuits over $\F_{p_j}$ and by Tellegen's transposition principle in~\autoref{prop:transposition principle}, each transpose $H_j^{\top}, E_j^{\top}, L_j^{\top}$ is computable with nearly same number of operations, so $V_j^{\top} = H_j^{\top} E_j^{\top} L_j^{\top}$ is computed in nearly same time as that of $V_j$.

\end{proof}

Likewise \autoref{lem:vandermonde-matvec}, the statement in \autoref{thm:transposed-mme} also holds for its submatrices bearing a similar time bound.

\begin{remark}

    As mentioned earlier, we could have used the algebraic algorithm of \cite{umans08,BGKM23} instead, which would have avoided the need to establish the transposed multivariate multipoint evaluation result in \autoref{thm:transposed-mme}. We need transposition principle in \autoref{thm: min-poly-constant-zero}. However, the algebraic approach would impose additional constraints on the underlying field, which would also be reflected in the final interpolation result. We avoid these constraints by taking the non-algebraic route.

\end{remark}

\subsection{Deterministic Krylov minimal polynomial} \label{sec:det-krylov}

In this section, we prove one of the key component, that is deterministic minimal polynomial computation for the Krylov sequence.

\begin{lemma}\label{lem:compute min poly det.}
    Let \( A \in \F^{n \times n} \)  and \( \vb \in \F^n \). 
 Then, the minimal polynomial annihilating the Krylov sequence $\{A^i\vb\}_{i \in \mathbb{N}}$ can be computed in deterministic \(\widetilde{O}(n\cdot M(n)+n^2)\) time.
\end{lemma}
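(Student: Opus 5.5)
The plan follows the deterministic variant sketched in the proof overview: compute the first $2n$ Krylov vectors explicitly, run Berlekamp--Massey on each of the $n$ coordinate sequences, and take the least common multiple of the results.

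First, compute $\vb, A\vb, A^2\vb, \ldots, A^{2n-1}\vb$ by repeated matrix-vector multiplication, each vector obtained from the previous one. This uses $2n-1$ products, costs $O(n\cdot M(n))$, and requires storing $2n$ vectors of length $n$, that is, $O(n^2)$ space. Arranging these vectors as columns of an $n\times 2n$ array, the $j$-th row gives the scalar sequence $s_j(i)=\ve_j^\top A^i\vb$ for $0\le i<2n$.

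Second, justify that $2n$ terms suffice for Berlekamp--Massey on each row. Since $\krymin{A}{\vb}$ annihilates the whole Krylov sequence, it annihilates every coordinate sequence $s_j$. Hence the minimal polynomial $\mu_j$ of $s_j$ divides $\krymin{A}{\vb}$, which divides $\matmin{A}$, so $\deg\mu_j\le n$. By the guarantee of Berlekamp--Massey stated in the preliminaries, running it on the first $2n$ terms of $s_j$ correctly returns $\mu_j$ in $\widetilde O(n)$ operations. Over all $n$ coordinates this costs $\widetilde O(n^2)$.

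Third, prove that $\krymin{A}{\vb}=\operatorname{lcm}(\mu_1,\ldots,\mu_n)$. A polynomial $p$ satisfies $p(A)A^i\vb=0$ for all $i$ if and only if every coordinate of $p(A)A^i\vb$ vanishes for all $i$. This holds if and only if $p$ annihilates each $s_j$, which in turn holds if and only if $\mu_j\mid p$ for every $j$. The set of annihilators of the Krylov sequence is therefore exactly the set of common multiples of the $\mu_j$. Its monic generator is the lcm, and the lcm is the minimal annihilator by the uniqueness of the monic minimal polynomial.

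\textbf{Main obstacle.} The step I expect to need the most care is computing the lcm within $\widetilde O(n^2)$. A naive iterative lcm $L_k=\operatorname{lcm}(L_{k-1},\mu_k)$ uses fast gcd and division of polynomials of degree at most $n$. Every intermediate $L_k$ divides $\krymin{A}{\vb}$, so its degree stays at most $n$, and each step costs $\widetilde O(n)$ via the half-gcd algorithm. This gives $n\cdot\widetilde O(n)=\widetilde O(n^2)$ in total, so no balanced product tree is needed. Summing the three steps gives the claimed bound $\widetilde O(n\cdot M(n)+n^2)$, and every step is deterministic.
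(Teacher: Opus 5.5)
Your proposal is correct and follows the paper's proof essentially step for step. You compute $\vb, A\vb, \ldots, A^{2n-1}\vb$, run Berlekamp--Massey on each of the $n$ coordinate sequences, and form the lcm iteratively, with the degree bound $n$ keeping each lcm step at $\widetilde{O}(n)$; your explicit check that $\deg \mu_j \le n$, so that $2n$ terms suffice for Berlekamp--Massey, is a detail the paper leaves implicit.
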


\begin{proof}

We use the Berlekamp--Massey algorithm to compute the minimal polynomial associated with the Krylov sequence
\[
\mathcal{K}(A,\vb)=\{A^i\vb\}_{i\in \mathbb{N}}.
\]
Since our goal is to solve the linear system, it suffices to recover the minimal polynomial of $\mathcal{K}(A,\vb)$ rather than the minimal polynomial of the matrix sequence $\{A^i\}_{i\in \mathbb{N}}$ itself. The classical Berlekamp--Massey algorithm applies to scalar recurrences. A standard randomized reduction replaces the vector sequence $\{A^i\vb\}_{i\in \mathbb{N}}$ by the scalar sequence
\[
\{\vu^TA^i\vb\}_{i\in \mathbb{N}}
\]
for a random vector $\vu$, which preserves the minimal polynomial with high probability. Instead, we avoid randomization and recover the minimal polynomial deterministically from the vector recurrence itself. Compute the vectors
\[
\{A^i\vb\}_{i=0}^{2n-1}
\]
by repeated application of $A$. For each coordinate $j\in [n]$, define the scalar sequence
\[
s_j(i)=\ve_j^TA^i\vb,
\qquad i\ge 0,
\]
where $\ve_j$ denotes the $j$-th standard basis vector. Apply the Berlekamp--Massey algorithm to the first $2n$ terms of $s_j$ to compute its minimal polynomial $\mu_j(x)$. Finally define
\[
\krymin{A}{\vb}(x)=\operatorname{lcm}(\mu_1(x),\mu_2(x),\dots,\mu_n(x)).
\]

\begin{algorithm}[H]
\caption{Deterministic Krylov Minimal Polynomial}
\textbf{Input:} $A\in \F^{n\times n}$ and vector $\vb\in \F^n$\\
\textbf{Output:} Minimal polynomial $\mu^A_{\vb}(x)$ of $\mathcal{K}(A,\vb)$
\begin{algorithmic}[1]
\State $v_0 \gets \vb$
\For{$i=1$ to $2n-1$}
    \State $v_i \gets A v_{i-1}$
\EndFor

\For{$j=1$ to $n$}
    \State Define scalar sequence 
    $s_j(i)=\ve_j^Tv_i$ for $0\le i\le 2n-1$
    \State Compute the minimal polynomial $\mu_j(x)$ of $s_j$ using Berlekamp--Massey
\EndFor

\State $\krymin{A}{\vb}(x)\gets \operatorname{lcm}(\mu_1(x),\dots,\mu_n(x))$
\State \Return $\krymin{A}{\vb}(x)$
\end{algorithmic}
\end{algorithm}

\paragraph*{Proof of correctness.}
We need to argue that the polynomial $\mu^A_{\vb}(x)$ is precisely the minimal polynomial of the Krylov sequence $\mathcal{K}(A,\vb)$.
Let $p(x)$ denote the minimal polynomial of $\mathcal{K}(A,\vb)$. Then
\[
p(A)\vb=0.
\]
Applying $\ve_j^T$ on the left gives
\[
\ve_j^Tp(A)\vb=0,
\]
and hence $p$ annihilates the scalar sequence $s_j$. Therefore, $\mu_j(x) \mid p(x)~~\text{for all }j\in [n]$. Consequently, $\krymin{A}{\vb}(x)=\operatorname{lcm}(\mu_1(x),\dots,\mu_n(x))\mid p$. Conversely, since each $\mu_j(x)$ annihilates $s_j$, we have
\[
\ve_j^T\krymin{A}{\vb}(A)\vb=0
\qquad\text{for all }j\in [n].
\]
As $\{\ve_j\}_{j=1}^n$ forms a basis of $\F^n$, it follows that $\krymin{A}{\vb}(A)\vb=0$. Hence $\krymin{A}{\vb}$ annihilates the Krylov sequence $\mathcal{K}(A,\vb)$, and therefore $p\mid \krymin{A}{\vb}$. Thus, $p=\krymin{A}{\vb}$.

\paragraph{Time complexity.}
Computing the vectors $\{A^i\vb\}_{i=0}^{2n-1}$ requires $2n-1$ matrix-vector multiplications by $A$, and hence the total cost to compute all these vectors is $O(n \cdot M(n))$ arithmetic operations. Each scalar sequence has length $2n$, and Berlekamp--Massey is applied independently to all $n$ coordinate sequences. Since one instance of the fast Berlekamp--Massey algorithm runs in $\widetilde{O}(n)$ time, the total cost of this step is $\widetilde{O}(n^2)$ operations. Finally, computing $\krymin{A}{\vb}=\operatorname{lcm}(\mu_1,\dots,\mu_n)$
involves taking the least common multiple of $n$ polynomials of degree at most $n$. Using fast polynomial arithmetic, the gcd and division with remainder of two degree-$n$ polynomials can be computed in $\widetilde{O}(n)$ time~ \cite[Chapter 9, 11]{gathenGerhard03}), and hence one lcm computation also takes $\widetilde{O}(n)$ operations. Repeating this iteratively for all $n$ polynomials requires at most $\widetilde{O}(n^2)$ arithmetic operations, since we know that the degree of the lcm is always bounded by $n$. Therefore, the minimal polynomial of the Krylov sequence $\mathcal{K}(A,\vb)$ can be computed deterministically using $\widetilde{O}(n \cdot M(n)+n^2)$ arithmetic operations.

\end{proof}

\subsection{Nonsingular system} \label{sec:nonsingular}

We first prove our results for the nonsingular system, and in the next section we do it for the singular case.

\begin{theorem} \label{thm: invertible system}
  Let \( A \in \F^{n \times n} \) be an invertible matrix and \( \vb \in \F^n \). 
Then the linear system $A\vx = \vb$ can be solved deterministically in \(\widetilde{O}(n \cdot M(n)+n^2)\) time.
\end{theorem}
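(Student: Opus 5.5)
The plan is to combine \autoref{lem:compute min poly det.} with the recovery formula from the proof overview. First, apply \autoref{lem:compute min poly det.} to $A$ and $\vb$. This yields the Krylov minimal polynomial $\krymin{A}{\vb}(x)=x^k+c_{k-1}x^{k-1}+\cdots+c_0$, with $k\le n$, deterministically in $\widetilde{O}(n\cdot M(n)+n^2)$ time. As a by-product of that lemma we already hold the vectors $\vb, A\vb,\ldots,A^{2n-1}\vb$. We therefore keep the first $k$ of them, which costs no additional matrix-vector products.

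Second, argue that $c_0\neq 0$. We have $\krymin{A}{\vb}\mid\matmin{A}\mid\chi_A$, and $\chi_A(0)=\pm\det(A)\neq 0$ since $A$ is invertible, so $\krymin{A}{\vb}(0)\neq 0$. A more self-contained alternative: if $c_0=0$, write $\krymin{A}{\vb}(x)=x\,q(x)$. Then $A\,q(A)\vb=0$, and invertibility of $A$ gives $q(A)\vb=0$. This contradicts minimality, since $\deg q<k$.

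Third, rearrange $\krymin{A}{\vb}(A)\vb=0$ to obtain
\[
\vx=-\frac{1}{c_0}\left(c_1\vb+c_2A\vb+\cdots+c_{k-1}A^{k-2}\vb+A^{k-1}\vb\right),
\]
and check that $A\vx=-\frac1{c_0}\bigl(\krymin{A}{\vb}(A)\vb-c_0\vb\bigr)=\vb$. Forming this linear combination of at most $n$ stored vectors of length $n$ costs $O(n^2)$ field operations. Hence the total running time is $\widetilde{O}(n\cdot M(n)+n^2)$.

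I do not expect a real obstacle. The only point needing care is the nonvanishing of the constant term, which the divisibility chain (or the short minimality argument) settles. One could also recompute the powers $A^i\vb$ on the fly instead of storing them, which adds another $O(n\cdot M(n))$ and leaves the stated bound unchanged.
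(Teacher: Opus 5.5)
Your proposal is correct and matches the paper's proof essentially step for step: compute $\krymin{A}{\vb}$ via \autoref{lem:compute min poly det.}, use invertibility of $A$ to get $c_0\neq 0$, then read off $\vx$ as the linear combination in \autoref{eq:x} of the stored Krylov vectors, at $O(n^2)$ extra cost. Two of your steps are small gains in rigor over the paper, which simply asserts $c_0\neq 0$ and multiplies by $A^{-1}$: your explicit justification that $c_0\neq 0$, and your direct check $A\vx=\vb$; the latter is exactly the argument the paper later needs in \autoref{prop: min-poly-constant-not-zero}.
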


\begin{proof}
Let $\krymin{A}{\vb}(x)=x^d+c_{d-1}x^{d-1}+\cdots+c_1x+c_0$, be the degree $d \leq n$ minimal polynomial of the Krylov sequence $\{A^i\vb\}_{i \in \mathbb{N}}$ computed using \autoref{lem:compute min poly det.}. Then,
$$
\krymin{A}{\vb}(A)\cdot\vb = 0,
$$
that is, $$A^d\vb + c_{d-1}A^{d-1}\vb + \cdots + c_1A\vb + c_0\vb = 0.$$

Since \(A\) is invertible, we must have \(c_0 \neq 0\). Rearranging the above equation gives 
\begin{equation} \label{eq: min. transformation}
    A\left(A^{d-1}\vb + c_{d-1}A^{d-2}\vb + \cdots + c_1\vb\right)
= -c_0\vb.
\end{equation}

Multiplying both sides by \(A^{-1}\), we obtain $$A^{d-1}\vb + c_{d-1}A^{d-2}\vb + \cdots + c_1\vb
= -c_0A^{-1}\vb.$$
Therefore,

\begin{equation}\label{eq:x}
A^{-1}\vb= \vx
=
-\frac{1}{c_0}
\left(
A^{d-1}\vb
+
c_{d-1}A^{d-2}\vb
+
\cdots
+
c_1\vb
\right).
\end{equation}  

Hence, \(A^{-1}\vb\) can be expressed as a linear combination of these set of vectors \(\vb, A\vb, A^2\vb, \ldots, A^{d-1}\vb\). The coefficients for this linear combination arising from the minimal polynomial, along with the vectors $\vb, A\vb, A^2\vb, \ldots, A^{d-1}\vb$, are already computed in \autoref{lem:compute min poly det.} and requires at most $\widetilde{O}(n\cdot M(n)+n^2)$ arithmetic operations. The linear combination in \autoref{eq:x} can be computed using at most \(O(dn)\) additional operations. Since \(d \leq n\), the total running time to solve the system is $\widetilde{O}(n\cdot M(n)+n^2)$.
\end{proof}



\begin{theorem} \label{thm:full-rank-interpolation}

Let \(N=d^m\). Given \(N\) pairs of data
\[
(\bm\alpha_1,\beta_1),\,
(\bm\alpha_2,\beta_2),\,
\ldots,\,
(\bm\alpha_N,\beta_N),
\]
where \(\bm\alpha_i\in \F^m\) and \(\beta_i\in \F\) for all \(i\in [N]\), and \(\bm\alpha_i\neq \bm\alpha_j\) for all \(i\neq j\). Let \(V\) denote the associated multivariate Vandermonde matrix corresponding to the evaluation points \(\bm\alpha_1,\ldots,\bm\alpha_N\) and monomials of individual degree less than \(d\). Suppose \(V\) is full rank. Then there exists a unique \(m\)-variate polynomial \(f(\vy)\in \F[\vy]\) with individual degree less than \(d\) such that
\[
f(\bm\alpha_i)=\beta_i
\qquad \text{for all } i\in [N],
\]
and moreover, \(f\) can be computed deterministically in \(N^{2+o(1)}\poly(m,d,\log|\F|) \) time, assuming \(d=\omega(1)\).

\end{theorem}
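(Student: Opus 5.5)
The plan is to cast interpolation as the square linear system $V\vx=\bm\beta$ and then combine \autoref{thm: invertible system} with the fast matrix-vector bound of \autoref{lem:vandermonde-matvec}. Index the columns of $V$ by exponent vectors $\bm e\in\intbr{d}^m$ and the rows by the points $\bm\alpha_i$, so that $V_{i,\bm e}=\bm\alpha_i^{\bm e}$. A polynomial $f(\vy)=\sum_{\bm e}x_{\bm e}\vy^{\bm e}$ of individual degree less than $d$ satisfies $f(\bm\alpha_i)=\beta_i$ for all $i\in[N]$ exactly when its coefficient vector $\vx=(x_{\bm e})$ solves $V\vx=\bm\beta$, where $\bm\beta=(\beta_1,\ldots,\beta_N)$. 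Since $N=d^m$, the matrix $V$ is square, and full rank means it is invertible.

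Existence and uniqueness then follow at once. The map $\vx\mapsto V\vx$ is a bijection, so $\vx=V^{-1}\bm\beta$ is the unique solution. The bijection between coefficient vectors and polynomials of individual degree less than $d$ then gives a unique $f$.

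For the algorithm, apply \autoref{thm: invertible system} with $A=V$, $\vb=\bm\beta$ and $n=N$. This returns $\vx$ deterministically in $\widetilde{O}(N\cdot M_V(N)+N^2)$ field operations. The input never has to be written as an explicit $N\times N$ matrix: the algorithm only applies $V$ about $2N$ times to vectors in the Krylov computation of \autoref{lem:compute min poly det.}, and each such product is a multivariate multipoint evaluation of the polynomial with coefficient vector $\vv$ at $\bm\alpha_1,\ldots,\bm\alpha_N$. By \autoref{lem:vandermonde-matvec} with $B=V$ and $d=\omega(1)$, each product costs $M_V(N)=N^{1+o(1)}\poly(m,d,\log|\F|)$. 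The total is therefore $N\cdot N^{1+o(1)}\poly(m,d,\log|\F|)+\widetilde{O}(N^2)=N^{2+o(1)}\poly(m,d,\log|\F|)$, where the polylogarithmic factors of the soft-$O$ are absorbed into $N^{o(1)}$.

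Two points need care in the accounting. The first is the cost model: the algorithm of \cite{BGGKU24} behind \autoref{lem:vandermonde-matvec} is non-algebraic and works with bit representations. Its running time is still dominated by the stated bound per call, and the remaining Berlekamp--Massey, lcm and linear-combination steps are honest field operations. Taking the $\poly(\log|\F|)$ factor as the conversion between the two models, the bounds add consistently. The second point is that the transposed multipoint evaluation of \autoref{thm:transposed-mme} is not needed here, because the nonsingular Wiedemann procedure only multiplies by $V$ itself.
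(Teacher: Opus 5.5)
Your proposal is correct and follows the paper's proof. Invertibility of the square matrix $V$ gives existence and uniqueness, and substituting the multipoint-evaluation bound $M(N)=N^{1+o(1)}\poly(m,d,\log|\F|)$ of \cite{BGGKU24} (\autoref{lem:vandermonde-matvec}) into \autoref{thm: invertible system} gives the stated running time. Your two added remarks are consistent with the paper and simply make explicit what its short proof leaves implicit: only forward products by $V$ are needed here, so \autoref{thm:transposed-mme} plays no role, and the non-algebraic cost model is absorbed into the $\poly(\log|\F|)$ factor.
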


\begin{proof}
The proof of this theorem is a direct corollary of \autoref{thm: invertible system}. Since the multivariate Vandermonde matrix $V$ is assumed to be full rank, it remains to bound the cost of matrix-vector multiplication by $V$ in order to apply \autoref{thm: invertible system}. We already know the associated multivariate Vandermonde matrix-vector multiplication complexity $M(N)=N^{1+o(1)}\poly(m,d,\log|\F|)$ from Bhargava et al.~\cite{BGGKU24} as long as \(d=\omega(1)\). Substituting this bound into \autoref{thm: invertible system} yields the claimed interpolation algorithm run time. The existence and uniqueness of the interpolant $f$ follow from the invertibility of $V$.

\end{proof}

\subsection{Singular system} \label{sec:singular}
We first show that the \textit{constant term} of the minimal polynomial obtained using \autoref{lem:compute min poly det.} plays a crucial role in determining the interpolation strategy. We first consider the case when the \textit{constant term} is non-zero.

\begin{proposition} \label{prop: min-poly-constant-not-zero}
Let \(A \in \F^{n \times n}\) and let \(\vb \in \F^n\). Suppose that the minimal polynomial \(\krymin{A}{\vb}\) of the Krylov sequence \(\{A^i\vb\}_{i \in \mathbb{N}}\) has non-zero constant term, i.e. \(\krymin{A}{\vb}(0)\neq 0\). Then the system
\[
A\vx=\vb
\]
admits a solution \(\vx\), which can be computed deterministically in \(\widetilde{O}(n\cdot M(n)+n^2)\) time.
\end{proposition}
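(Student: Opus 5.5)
The plan is to reuse the computation of \autoref{thm: invertible system} almost verbatim, observing that invertibility of $A$ was used only to guarantee $c_0\neq 0$. First, apply \autoref{lem:compute min poly det.} to obtain
\[
\krymin{A}{\vb}(x)=x^d+c_{d-1}x^{d-1}+\cdots+c_1x+c_0 ,
\]
together with the vectors $\vb,A\vb,\ldots,A^{2n-1}\vb$. This costs $\widetilde{O}(n\cdot M(n)+n^2)$ deterministically. By hypothesis $c_0=\krymin{A}{\vb}(0)\neq 0$. Note that the lemma only uses that $\{A^i\vb\}$ satisfies a recurrence of order at most $n$, not that $A$ is invertible.

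Next, I would start from the identity $\krymin{A}{\vb}(A)\vb=0$. Rewriting it exactly as in \autoref{eq: min. transformation}, without ever multiplying by $A^{-1}$, gives
\[
A\left(A^{d-1}\vb+c_{d-1}A^{d-2}\vb+\cdots+c_1\vb\right)=-c_0\vb .
\]
Dividing by the nonzero scalar $-c_0$ shows that
\[
\vx:=-\frac{1}{c_0}\left(A^{d-1}\vb+c_{d-1}A^{d-2}\vb+\cdots+c_1\vb\right)
\]
satisfies $A\vx=\vb$. This proves existence, and also shows that $\vb$ lies in the column space of $A$ even when $A$ is singular. The degenerate case $d=0$ cannot occur: it would force $\krymin{A}{\vb}=1$, hence $\vb=0$. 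In that case the constant $1$ is still nonzero, the bracket is empty, and $\vx=0$ is correct, so it is harmless.

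Finally, the running time is the same as before. The Krylov vectors are already available, and forming the linear combination costs $O(dn)\le O(n^2)$ further operations, for a total of $\widetilde{O}(n\cdot M(n)+n^2)$.

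There is no real obstacle here. The only point needing care is to make sure no step silently uses $A^{-1}$, which is why the argument goes through the factored form \autoref{eq: min. transformation} rather than \autoref{eq:x}. Optionally, one can verify $A\vx=\vb$ with one more matrix–vector product as a certificate, though correctness is already guaranteed algebraically.
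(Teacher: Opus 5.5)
Your proposal is correct and follows the paper's proof essentially verbatim: compute $\krymin{A}{\vb}$ and the Krylov vectors via \autoref{lem:compute min poly det.}, observe that only $c_0\neq 0$ is needed to form $\vx$ as the scaled linear combination of $\vb,\ldots,A^{d-1}\vb$, and charge $O(dn)\le O(n^2)$ extra operations. Your choice to justify $A\vx=\vb$ directly from the factored identity \autoref{eq: min. transformation}, rather than citing \autoref{eq:x} as the paper does (whose derivation in the paper passed through $A^{-1}$), is a small but genuine tightening of the same argument.
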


\begin{proof}
From \autoref{eq:x}, in order to compute a solution \(\vx\), the only requirement is that the constant term \(c_0=\krymin{A}{\vb}(0)\) is non-zero, which is already assumed. In particular, the proof does not use the invertibility of \(A\). Instead, \autoref{eq:x} shows that a solution \(\vx\) can be expressed as a linear combination of the vectors
\[
\{\vb, A\vb, A^2\vb, \ldots, A^{d-1}\vb\}.
\]

Equivalently, whenever \(\krymin{A}{\vb}(0)\neq 0\), the vector \(\vb\) lies in the column space of \(A\), and the minimal polynomial computation explicitly produces a corresponding solution vector \(\vx\). Note that the solution need not be unique.

\paragraph*{Time complexity.}
By \autoref{lem:compute min poly det.}, the minimal polynomial together with all intermediate Krylov vectors can be computed in \(\widetilde{O}(n\cdot M(n)+n^2)\) time. Using these vectors, the linear combination in \autoref{eq:x} can be computed using at most \(O(n^2)\) additional operations, since $d \leq n$. Therefore, the total running time is \(\widetilde{O}(n\cdot M(n)+n^2)\).
\end{proof}

\begin{theorem} \label{singular-nonzero-constant-inter}
Let $\F$ be a finite field and \(N=d^m\). Given \(N\) distinct data pairs
\[
(\bm\alpha_1,\beta_1),\,
(\bm\alpha_2,\beta_2),\,
\ldots,\,
(\bm\alpha_N,\beta_N),
\]
where \(\bm\alpha_i\in \F^m\) and \(\beta_i\in \F\) for all \(i\in [N]\), and \(\bm\alpha_i\neq \bm\alpha_j\) for all \(i\neq j\). Let \(V\) denote the associated $m$-variate Vandermonde matrix corresponding to the evaluation points \(\bm\alpha_1,\ldots,\bm\alpha_N\) and monomials of individual degree less than \(d\), and let
\(
\bm\beta=\bigl(\beta_1,\beta_2,\ldots,\beta_N\bigr).
\)
Suppose that the minimal polynomial of the Krylov sequence \(\{V^i\bm\beta\}_{i\in\mathbb N}\) has non-zero constant term. Then, there exists an $m$-variate polynomial \(f(\vy)\in\F[\vy]\) with individual degree less than \(d\), satisfying
\[
f(\bm\alpha_i)=\beta_i
\qquad \text{for all } i\in [N],
\]
and moreover $f$ can be computed deterministically in $N^{2+o(1)}\poly(m,d,\log|\F|)$ time, assuming \(d=\omega(1)\).
\end{theorem}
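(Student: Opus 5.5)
This is a direct combination of \autoref{prop: min-poly-constant-not-zero} with the fast matrix-vector multiplication bound of \autoref{lem:vandermonde-matvec}, in the same way \autoref{thm:full-rank-interpolation} was derived from \autoref{thm: invertible system}.

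The plan is to set $A=V\in\F^{N\times N}$ and $\vb=\bm\beta$. Interpolation is then exactly the linear system $V\vx=\bm\beta$, where $\vx=(x_{\bm e})_{\bm e\in\intbr{d}^m}$ is the coefficient vector of $f(\vy)=\sum_{\bm e}x_{\bm e}\vy^{\bm e}$. Indeed, row $i$ of $V\vx$ equals $f(\bm\alpha_i)$ by the definition of the multivariate Vandermonde matrix.

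The steps, in order, are as follows.

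\begin{enumerate}
\item Run the deterministic algorithm of \autoref{lem:compute min poly det.} on $(V,\bm\beta)$. This computes the Krylov vectors $V^i\bm\beta$ for $0\le i\le 2N-1$ and the minimal polynomial $\krymin{V}{\bm\beta}(x)=x^{t}+c_{t-1}x^{t-1}+\cdots+c_0$.

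\item By hypothesis $c_0\neq 0$. We use the identity $\krymin{V}{\bm\beta}(V)\bm\beta=0$, rearranged as in \autoref{eq: min. transformation}. It shows that
\[
\vx=-\frac{1}{c_0}\left(V^{t-1}\bm\beta+c_{t-1}V^{t-2}\bm\beta+\cdots+c_1\bm\beta\right)
\]
satisfies $V\vx=\bm\beta$. This uses only $c_0\neq 0$, not the invertibility of $V$, which is exactly the content of \autoref{prop: min-poly-constant-not-zero}.

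\item Existence of $f$ follows by reading off the coefficients from $\vx$. Every monomial $\vy^{\bm e}$ with $\bm e\in\intbr{d}^m$ has individual degree less than $d$, so $f$ lies in the required space and satisfies $f(\bm\alpha_i)=\beta_i$ for all $i$.
\end{enumerate}

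For the running time, \autoref{prop: min-poly-constant-not-zero} gives $\widetilde{O}(N\cdot M_V(N)+N^2)$ field operations. By \autoref{lem:vandermonde-matvec} (that is, the multipoint evaluation of Bhargava et al.~\cite{BGGKU24}), $M_V(N)=N^{1+o(1)}\poly(m,d,\log|\F|)$ provided $d=\omega(1)$. Substituting gives $N^{2+o(1)}\poly(m,d,\log|\F|)$, where the polylogarithmic factors of $\widetilde{O}$ are absorbed into $N^{o(1)}$.

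The only point needing care is that each Krylov step $V\cdot(V^{i-1}\bm\beta)$ is a genuine multipoint evaluation. Concretely, the vector $V^{i-1}\bm\beta\in\F^N$ is reinterpreted as the coefficient vector of a polynomial of individual degree less than $d$, and that polynomial is evaluated at $\bm\alpha_1,\ldots,\bm\alpha_N$. This requires identifying the row index set (points) with the column index set (monomials) via a fixed bijection, which is legitimate because $N=d^m$ makes $V$ square. Since the whole procedure is deterministic, and the output can be verified with one more evaluation at the same cost, no further obstacle arises.
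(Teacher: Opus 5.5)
Your proposal is correct and matches the paper's proof: the paper also obtains this theorem as an immediate corollary of \autoref{prop: min-poly-constant-not-zero} combined with the near-linear Vandermonde matrix-vector bound $M(N)=N^{1+o(1)}\poly(m,d,\log|\F|)$ of Bhargava et al.~\cite{BGGKU24}. Your extra details (the explicit formula for $\vx$, reading off the coefficients of $f$, absorbing polylogarithmic factors into $N^{o(1)}$, and fixing a row--column bijection so that $V^i$ makes sense) only spell out what the paper leaves implicit.
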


\begin{proof}
    The proof is an immediate corollary of the \autoref{prop: min-poly-constant-not-zero}, since we know the associated multivariate Vandermonde matrix-vector multiplication complexity $M(N)=N^{1+o(1)}\poly(m,d,\log|\F|)$ from Bhargava et al.~\cite{BGGKU24} as long as \(d=\omega(1)\). 
\end{proof}

\vspace{1mm}

Now, from here on, we focus on the case when \(\krymin{A}{\vb}(0)=0\), which implies that \(\corank(A) \geq 1\). We divide this into the following two subcases:
\begin{enumerate}
    \item \textbf{\(\corank(A)=O(1)\)}, 
    \item \textbf{\(\corank(A)=\omega(1)\)}.
\end{enumerate}

\subsubsection{Case 1~: \texorpdfstring{$\corank(A)=O(1)$}{}}
In this case, the aim is to prune the system by removing redundant rows and columns to reduce it to the invertible case \autoref{thm: invertible system}.

\begin{theorem} \label{thm: min-poly-constant-zero}
   Let  $A \in \F^{n \times n}$ supports fast transposed matrix-vector multiplication complexity $\widetilde{O}(M(n))$, with rank \(n-O(1)\) and $\vb \in \F^n$, where the minimal polynomial $\krymin{A}{\vb}$ of the Krylov sequence {$\{A^i\vb\}_{i \in \mathbb{N}}$} has a zero-constant term, $\krymin{A}{\vb}(0)= 0$. Then, there is a deterministic procedure to decide whether $A\vx=\vb$ is consistent and, if so, compute a solution $\vx\in\F^n$ in $\widetilde{O}(n\cdot M(n)+n^2)$ time.
\end{theorem}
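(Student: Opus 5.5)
The plan is to exploit the zero constant term to produce an explicit kernel vector, use it to delete one redundant column, and recurse until a nonsingular square subsystem remains. Each round costs one Krylov computation via \autoref{lem:compute min poly det.}, and since the co-rank is $O(1)$ only $O(1)$ rounds are needed.

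\textbf{Finding a dependence.} Write $\krymin{A}{\vb}(x)=x^k q(x)$ with $k\ge 1$ and $q(0)\neq 0$. The key observation is that
\[
\vw := A^{k-1}q(A)\vb
\]
satisfies $\vw\neq\vzero$, by minimality of $\krymin{A}{\vb}$, and $A\vw=\krymin{A}{\vb}(A)\vb=\vzero$. The vectors $A^i\vb$ for $i<2n$ are already available from \autoref{lem:compute min poly det.}, so $\vw$ costs $O(n^2)$ further operations. The vector $\vw$ is a nontrivial column dependence. Choose an index $j$ with $w_j\neq 0$; column $j$ then lies in the span of the other columns. Hence $A\vx=\vb$ is consistent if and only if $A'\vx'=\vb$ is consistent, where $A'$ is $A$ with column $j$ removed. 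Any solution $\vx'$ lifts back to a solution by inserting a zero at coordinate $j$.

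\textbf{Restoring a square system.} The matrix $A'$ is $n\times(n-1)$, so the next step is to find a row to drop. I would apply the same Krylov machinery to a square matrix built from $A'^{\top}$, for instance $A'^{\top}$ padded with a zero column, or $A'$ restricted to rows $[n]\setminus\{i\}$ for a candidate $i$. Products with $A'^{\top}$ cost $\widetilde{O}(M(n))$ by the hypothesis on fast transposed products, which for Vandermonde submatrices is supplied by \autoref{thm:transposed-mme}.

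A kernel vector $\vu$ of $A'^{\top}$, obtained exactly as $\vw$ was, gives $\vu^{\top}A'=\vzero$. Then:
\begin{itemize}
\item if $\vu^{\top}\vb\neq 0$, the system is certified inconsistent;
\item otherwise, picking $i$ with $u_i\neq 0$, row $i$ is a combination of the other rows, and the same relation holds for $\vb$, so equation $i$ is redundant and can be dropped.
\end{itemize}
After one round we have an $(n-1)\times(n-1)$ system equivalent to the original one.

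\textbf{Recursing.} On the reduced square system, compute the Krylov minimal polynomial again. If its constant term is nonzero, \autoref{prop: min-poly-constant-not-zero} produces a solution, which I verify by one multiplication. If the constant term is zero, repeat the round. The rank never drops under these deletions, since we only remove dependent columns and rows, so the co-rank decreases by one per round. After at most $\corank(A)=O(1)$ rounds the system is nonsingular, and \autoref{thm: invertible system} applies. The total cost is $O(1)\cdot\widetilde{O}(n M(n)+n^2)$, since submatrices inherit the matrix-vector costs.

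\textbf{Main obstacle.} The step I expect to be hardest is the transposed round. The Krylov minimal polynomial of $A'^{\top}$ (suitably squared) at a chosen vector may have nonzero constant term even though a left kernel exists. That is, the zero constant term is only sufficient for exhibiting a kernel vector, not necessary.

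My first attempt would be to use $\vb$ itself, or the columns of $A$, as starting vectors, and to argue via the consistency check. If the constant term is nonzero, we obtain a genuine solution of the corresponding transposed square system. I would need to show that this either lets us certify a redundant row through a verification step, or exposes a dependence. If this fails, I would fall back on trying the $O(1)$-many candidate Krylov starting vectors of the form $\ve_i$ until a zero constant term appears, bounding the number of trials by the co-rank.
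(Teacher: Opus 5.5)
Your column step is the paper's: your $\mathbf{w}=A^{k-1}q(A)\vb$ is exactly the paper's vector $\bm z=(\krymin{A}{\vb}(x)/x)(A)\vb$. Your $\vu^{\top}\vb$ consistency test and the co-rank counting are also fine.

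\textbf{The gap is the row step.} The phrase ``a kernel vector $\vu$ of $A'^{\top}$, obtained exactly as $\mathbf{w}$ was'' presupposes a square matrix and a start vector whose Krylov minimal polynomial has zero constant term, and you never exhibit one. As you observe, the constant term can be nonzero even when the left kernel is nontrivial; in fact $\krymin{M}{\vv}(0)\neq 0$ exactly when $\vv\in\mathrm{im}(M^{n})$. Your remedies do not close this. Starting from $\vb$ or a column of $A$, a nonzero constant term only yields a solution of some transposed system, which exposes no dependence among the rows of $A'$. Scanning $\ve_1,\ve_2,\dots$ is not bounded by the co-rank. For example, if $\mathrm{im}(M^{n})$ is the hyperplane $\{x_n=0\}$, then $\ve_1,\dots,\ve_{n-1}$ all give a nonzero constant term. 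You may then need $\Theta(n)$ Krylov computations, i.e.\ $\widetilde{O}(n^2M(n))$ time, which breaks the bound.

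\textbf{The missing idea} is to choose the right-hand side so that \emph{both} outcomes of the constant-term test yield a left-kernel vector of $A'$. The paper writes $A'^{\top}=[\,C\mid \bm r\,]$, with $C\in\F^{(n-1)\times(n-1)}$ and $\bm r$ the transposed last row of $A'$, and runs the deterministic Krylov computation for $C$ from $\bm r$.
\begin{itemize}
\item If $\krymin{C}{\bm r}(0)\neq 0$, then \autoref{prop: min-poly-constant-not-zero} gives $\bm u$ with $C\bm u=\bm r$. So $\vu=(\bm u,-1)$ satisfies $\vu^{\top}A'=\vzero$: the last row is a combination of the others.
\item If $\krymin{C}{\bm r}(0)=0$, your $\mathbf{w}$-construction applied to $C$ gives a nonzero kernel vector of $C$, i.e.\ a dependence among the first $n-1$ rows.
\end{itemize}
In either case your $\vu^{\top}\vb$ test then applies.

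Your padding idea also works once the start vector is fixed. Take $P=\begin{pmatrix}A'^{\top}\\ \vzero^{\top}\end{pmatrix}\in\F^{n\times n}$; a zero \emph{row} is what makes it square. Since $\ve_n\notin\mathrm{im}(P)$, the constant term $\krymin{P}{\ve_n}(0)$ is forced to be $0$. Indeed $\krymin{P}{\ve_n}=x\,\krymin{C}{\bm r}$, so $(\krymin{P}{\ve_n}(x)/x)(P)\ve_n$ is a nonzero vector of $\ker A'^{\top}$. This covers both of the paper's cases at once.
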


\begin{proof}
Let \(\krymin{A}{\vb}(x)=x^d+c_{d-1}x^{d-1}+\cdots+c_1x+c_0\) be obtained using \autoref{lem:compute min poly det.}. Since \(c_0=0\), the right-hand side of \autoref{eq: min. transformation} is identically zero. Hence,
\begin{equation} \label{eq:dependancy-rel}
    \bm z=A^{d-1}\vb+c_{d-1}A^{d-2}\vb+\cdots+c_1\vb 
\end{equation}
gives a non-trivial linear dependence among the columns of \(A\). Indeed, \(\bm z\neq \bm 0\), since otherwise it would contradict the minimality of \(\krymin{A}{\vb}\).

Let \(j\) be the first index such that \(z_j\neq 0\). If \(z_t=0\) for all \(t\neq j\), then the above relation implies that the \(j\)-th column of \(A\) is identically zero. Otherwise, the \(j\)-th column is linearly dependent on the columns indexed by those \(t\neq j\) for which \(z_t\neq 0\). In either case, the \(j\)-th column is redundant and can be removed. Let \(A_j\) denote the matrix obtained from \(A\) by deleting its \(j\)-th column.
To remove a corresponding redundant row, consider the transposed system \(B:=A_j^T\), which is an \((n-1)\times n\) matrix. Write
\begin{equation}\label{eq:remove-column}
    B=[C\mid \bm {row_n}],
\end{equation}
where \(C\in \F^{(n-1)\times(n-1)}\) consists of the first \(n-1\) columns of \(B\), and \(\bm {row_n}\) denotes the last column\footnote{\(\bm {row_n}\) is the $n$'th row in A, whose transpose we use in \autoref{eq:remove-column} but still denote it as $\bm {row_n}$ for simplicity.}. Let \(\bm u\in \F^{n-1}\) denote the vector obtained from \(\vx\) after deleting its \(j\)-th coordinate. We now consider the system
\[
C\bm u=\bm {row_n}.
\]

We apply the deterministic Wiedemann procedure to the Krylov sequence $\krylov{C}{\bm {row_n}}$, and if the resulting minimal polynomial has non-zero constant term, then by \autoref{thm: invertible system}, the system is solvable and hence \(\bm {row_n}\) lies in the column span of \(C\); equivalently, the last row of \(A_j\) is linearly dependent on the remaining rows and can be removed. Otherwise, the minimal polynomial has  zero constant term. By \autoref{eq: min. transformation}, this yields a non-trivial linear dependence among the columns of \(C\), allowing us to remove another redundant column of \(C\), which corresponds to removing another redundant row of \(A_j\). Whenever a row of \(A_j\) is removed, the corresponding equation is removed from the reduced system. In this case, we check whether the same linear dependency relation holds on the corresponding entry of the right-hand-side vector \(\vb\). If this consistency condition fails, then the original system has no solution.

Repeating this procedure iteratively removes one redundant row and one redundant column in each step. Since the original system has rank \(n-O(1)\), we obtain a full-rank square subsystem after constantly many such iterations. Note that any solution to the reduced full-rank subsystem corresponds to a solution for the original system by assigning arbitrary values to the free variables corresponding to the deleted columns. Hence, whenever the system is consistent, the reduced system is equivalent to the original system.

\paragraph*{Time complexity.}
In each iteration we do eliminate a column and a row depending upon the the minimal polynomial computation. In order to compute the minimal polynomial of $\{A^ib\}_{i \in \N}$, it takes \(\widetilde{O}(n\cdot M(n)+n^2)\) time using \autoref{lem:compute min poly det.}. We store the intermediate vector computations ${A^ib}$ for $i \in {[2n-1]}$ for the minimal polynomial computation. These vectors can be used to take a linear combination to output the $\bm{z}$ vector, which takes $O(n^2)$ many operations and to detect the first non-zero index $j$, it takes another $O(n)$ many operations at most. This takes care of the column removal complexity. Now to remove a row, we consider the reduced transposed system $C\bm u=\bm{row_n}$. Here again we compute the minimal polynomial for $\{C^i\bm{row_n}\}_{i \in \N}$ using \autoref{lem:compute min poly det.}, which takes \(\widetilde{O}\bigl((n-1)\cdot M(n-1)+(n-1)^2\bigr)\), since the original system $A$ supports fast transposed matrix-vector multiplication. At this point, either we remove $\bm{row_n}$ from the system or a column from $C$ by computing the dependency relation as computed in \autoref{eq:dependancy-rel}, for which it takes another $O(n^2)$ many operations. 

This is just for one level of iteration and it takes total \(\widetilde{O}(n\cdot M(n)+n^2)\) many operations in total and since the algorithm runs for constantly many such iterations, the total runtime is still \(\widetilde{O}(n\cdot M(n)+n^2)\).
\end{proof}

\begin{theorem} \label{thm:interpolation-close-full-rank}

Let \(N=d^m\) and $\F$ be a finite field. Given \(N\) distinct data pairs
\[
(\bm\alpha_1,\beta_1),\,
(\bm\alpha_2,\beta_2),\,
\ldots,\,
(\bm\alpha_N,\beta_N),
\]
where \(\bm\alpha_i\in \F^m\) and \(\beta_i\in \F\) for all \(i\in [N]\), and \(\bm\alpha_i\neq \bm\alpha_j\) for all \(i\neq j\). Let \(V\) denote the associated multivariate Vandermonde matrix corresponding to the evaluation points \(\bm\alpha_1,\ldots,\bm\alpha_N\) and monomials of individual degree less than \(d\), and let
\(
\bm\beta=\bigl(\beta_1,\beta_2,\ldots,\beta_N\bigr).
\)
Suppose $\rank(V)=N-O(1)$ and the minimal polynomial of the Krylov sequence $\krylov{V}{\bm\beta}$ has a zero constant term. Then, an $m$-variate polynomial \(f(\vy)\in\F[\vy]\) with individual degree less than \(d\), satisfying
\[
f(\bm\alpha_i)=\beta_i
\qquad \text{for all } i\in [N],
\]
can be computed deterministically in $N^{2+o(1)}\poly(m,d,\log|\F|)$ time, assuming \(d=\omega(1)\), provided such an $f$ exists.
   
\end{theorem}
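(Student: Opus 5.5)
This theorem should follow from \autoref{thm: min-poly-constant-zero} in the same way that \autoref{thm:full-rank-interpolation} and \autoref{singular-nonzero-constant-inter} followed from their linear-algebra counterparts. We set $A:=V$ and $\vb:=\bm\beta$, so that $n=N$. The hypotheses of \autoref{thm: min-poly-constant-zero} are then exactly the assumptions here: $\rank(V)=N-O(1)$ and $\krymin{V}{\bm\beta}(0)=0$. The one extra hypothesis is fast transposed matrix-vector multiplication. This is supplied by \autoref{thm:transposed-mme}, which gives $V^{\top}\bm y$ in $N^{1+o(1)}\poly(m,d,\log|\F|)$ time for $d=\omega(1)$. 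Forward products are handled by \autoref{lem:vandermonde-matvec}.

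The main point to check is that the pruning procedure only ever multiplies by submatrices of $V$ or $V^{\top}$. In the proof of \autoref{thm: min-poly-constant-zero}, each iteration deletes a column and then works with $C$, a square submatrix of $A_j^{\top}$, together with a column of $A_j^{\top}$ as right-hand side. Later iterations delete further rows and columns. Hence every matrix met is a submatrix of $V$ or $V^{\top}$, indexed by a subset of the evaluation points and a subset of the monomials.

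For submatrices of $V$, \autoref{lem:vandermonde-matvec} bounds the cost directly. For submatrices of $V^{\top}$, we use the remark after \autoref{thm:transposed-mme} that the transposed bound also holds for submatrices. Concretely, we restrict the weight vector to the surviving points and discard the outputs for deleted monomials. So each product costs $M(N)=N^{1+o(1)}\poly(m,d,\log|\F|)$.

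Two smaller steps are also needed.
\begin{enumerate}
\item The right-hand side $\bm{row_n}$ of each transposed subsystem is a row of $V$, i.e., the monomial evaluations at one point $\bm\alpha_i$. It can be written down in $O(N\poly(m,d))$ time.
\item Once the final full-rank square subsystem is solved by \autoref{thm: invertible system}, we set the deleted (free) coordinates to zero. The result is a coefficient vector $\vx$ of $f$, and we confirm $V\vx=\bm\beta$ with one more multipoint evaluation. If the consistency checks in the pruning fail, or this last verification fails, we report that no $f$ exists; otherwise we output $f$.
\end{enumerate}

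Substituting $n=N$ and $M(N)=N^{1+o(1)}\poly(m,d,\log|\F|)$ into the bound $\widetilde{O}(n\cdot M(n)+n^2)$ gives $N^{2+o(1)}\poly(m,d,\log|\F|)$ time. The procedure is deterministic because every ingredient is: \autoref{lem:compute min poly det.}, \cite{BGGKU24}, and \autoref{thm:transposed-mme}. The polylogarithmic factors and the constant number of iterations are absorbed into $N^{o(1)}$.

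The hardest step is justifying the submatrix version of the transposed evaluation. I would argue it by noting that restricting rows of $V$ just zero-pads the input vector, while restricting columns just drops output coordinates. Neither changes the cost of the algorithm in \autoref{thm:transposed-mme}.
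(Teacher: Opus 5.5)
Your proposal is correct and is the same argument as the paper's, which derives the theorem as an immediate corollary of \autoref{thm: min-poly-constant-zero} (with $A=V$, $\vb=\bm\beta$), using \cite{BGGKU24} for forward products and \autoref{thm:transposed-mme} for transposed ones. The points you spell out are left implicit by the paper, and you handle them correctly: that the pruning only touches submatrices of $V$ and $V^{\top}$, why the transposed bound restricts to such submatrices, setting the deleted coordinates to zero, and the final check $V\vx=\bm\beta$.
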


\begin{proof}
    The proof is an immediate corollary of \autoref{thm:transposed-mme} and \autoref{thm: min-poly-constant-zero} with $M(N)=N^{1+o(1)}\poly(m,d,\log|\F|)$ from Bhargava et al.~\cite{BGGKU24}.
\end{proof}

\subsubsection{Case 2~:~\texorpdfstring{$\corank=\omega(1)$}{}}


In this case, we cannot use the earlier approach of deleting one row and one column in each iteration to reduce the system to a full-rank one, since this may lead to a super-quadratic running time in the input size. Instead, we use a randomized approach due to Kaltofen and Saunders~\cite{KS91} to eliminate linear dependencies and determine the rank of the system.

The following lemma is a restatement of Theorem~2 from~\cite{KS91}.

\begin{lemma}[{\cite[Theorem~2]{KS91}}]\label{lem:UAL-toeplitz-transform}
Let \(A\in\mathbb{F}^{n\times n}\) be a matrix of rank \(r\), and let \(S\subseteq\mathbb{F}\). Define
\[
\widetilde{A}=UAL,
\]
where \(U\) and \(L\) are unit upper and lower triangular Toeplitz matrices, respectively, whose parameters are chosen independently and uniformly at random from \(S\). If \(\widetilde{A}_i\) denotes the leading principal \(i\times i\) submatrix of \(\widetilde{A}\), then
\[
\Pr\!\left(
\det(\widetilde{A}_i)\neq 0
\text{ for every } 1\le i\le r
\right)
\ge
1-\frac{r(r+1)}{|S|}.
\]
\end{lemma}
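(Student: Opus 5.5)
The plan is to follow the Kaltofen--Saunders argument: reduce everything to one fixed $r\times r$ minor of $\widetilde A$ and bound, via Schwartz--Zippel, the probability that a nonzero polynomial in the Toeplitz parameters vanishes. Treat the parameters $u_1,\dots,u_{n-1}$ of $U$ and $\ell_1,\dots,\ell_{n-1}$ of $L$ as indeterminates. For each $i\le r$, the entries of $\widetilde A_i$ are polynomials of degree at most $2$ in these indeterminates: at most $1$ in the $u$'s and at most $1$ in the $\ell$'s. So $\det(\widetilde A_i)$ is a polynomial of total degree at most $2i$. Once we show each such determinant is a nonzero polynomial, Schwartz--Zippel gives $\Pr[\det \widetilde A_i=0]\le 2i/|S|$. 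A union bound over $i=1,\dots,r$ then yields failure probability at most $\sum_{i\le r}2i/|S| = r(r+1)/|S|$, which is exactly the stated bound.

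The substance is therefore nonvanishing. We show that for each $i\le r$ there is some specialization of the parameters making $\det(\widetilde A_i)\neq 0$. Write $\widetilde A_i = U_{[1..i],*}\,A\,L_{*,[1..i]}$, i.e. the first $i$ rows of $U$ times $A$ times the first $i$ columns of $L$. Cauchy--Binet expands
\[
\det(\widetilde A_i)=\sum_{|I|=|J|=i}\det(U_{[i],I})\,\det(A_{I,J})\,\det(L_{J,[i]}).
\]
Since $\rank A=r\ge i$, some $i\times i$ minor $\det(A_{I,J})$ is nonzero. It then suffices to show that the polynomials $\det(U_{[i],I})$ for distinct $I$ are linearly independent over $\F$, and likewise for $L$. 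If so, the products $\det(U_{[i],I})\det(L_{J,[i]})$ are linearly independent, because they live in disjoint variable sets. The sum is then a nonzero polynomial.

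The linear independence of these Toeplitz minors is the main obstacle. I would establish it by exhibiting, for each $I=\{k_1<\dots<k_i\}$, a distinguished monomial appearing in $\det(U_{[i],I})$ and in no other minor. Entry $(s,k_t)$ of $U$ equals $u_{k_t-s}$ (with $u_0=1$ and $u_{<0}=0$), so the diagonal product gives the candidate monomial $\prod_t u_{k_t-t}$. Under a suitable monomial order, for instance lexicographic order on the exponent vector, this should be the leading term of $\det(U_{[i],I})$. The key check is that distinct sets $I$ give distinct leading monomials. One verifies this by showing that the map $I\mapsto$ the multiset $\{k_t-t\}$ is injective, since the sequence $k_t-t$ is weakly increasing, and that no other permutation produces the same or a larger monomial. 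This is the delicate combinatorial step, and the place where the unit-triangular Toeplitz structure is essential. If it proves too fiddly, the fallback is Kaltofen--Saunders' own route: induction on $i$ via the generic-rank-profile argument.
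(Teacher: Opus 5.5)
The paper gives no argument for this lemma beyond citing Kaltofen--Saunders, so your proposal has to stand on its own, and its skeleton is right. Each entry of $\widetilde A_i$ is bilinear in the $u$'s and $\ell$'s, so Schwartz--Zippel and a union bound give exactly $\sum_{i\le r}2i/|S|=r(r+1)/|S|$. Cauchy--Binet is applied correctly. Independence of $\{\det U_{[i],I}\}_I$ and of $\{\det L_{J,[i]}\}_J$ (the latter is the former with $u\mapsto\ell$, after transposing) does yield independence of the products, hence $\det\widetilde A_i\not\equiv 0$ for $i\le r$.

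The gap is precisely the step you flag, and the order you suggest does not close it. Under the natural lexicographic order $u_1>u_2>\cdots$ the diagonal monomial is not the leading term, and leading terms of different minors collide. For $i=4$ (and $n\ge 7$), $I=\{1,2,5,6\}$ gives $\det U_{[4],I}=u_2^2-u_1u_3$ with diagonal term $u_2^2$. Meanwhile $I'=\{1,2,4,7\}$ gives $u_1u_3-u_4$ with diagonal term $u_1u_3$. Both have lex-leading monomial $u_1u_3$, so the triangularity argument fails as stated.

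The diagonal term is extremal in the other direction, and the missing ingredient is the rearrangement inequality. Writing $c^\sigma_t=k_t-\sigma(t)$,
\[
\sum_t (c^\sigma_t)^2=\sum_t k_t^2+\sum_t t^2-2\sum_t k_t\,\sigma(t).
\]
Since $k_1<\cdots<k_i$, one has $\sum_t k_t\sigma(t)<\sum_t k_t\,t$ for every $\sigma\ne\mathrm{id}$. Take the weight $w(\prod_j u_j^{e_j})=\sum_j j^2e_j$, which is consistent with $u_0=1$. Then $m_I=\prod_t u_{k_t-t}$ is the unique minimum-weight monomial of $\det U_{[i],I}$ and has coefficient exactly $1$, so no other permutation can cancel it. (Via majorization this can also be phrased with a genuine monomial order: $m_I$ is the leading term for graded reverse lex with $u_1>\cdots>u_{n-1}$.)

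Injectivity of $I\mapsto m_I$ needs one more remark than you gave, because $u_0=1$ erases the zero entries of $(k_t-t)_t$. Since $i$ is fixed, the number of zeros is $i-\deg m_I$, so the weakly increasing sequence, and hence $I$, is recovered.

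Independence then follows. Suppose $\sum_I a_I\det U_{[i],I}=0$ with some $a_I\ne0$, and choose $I_0$ with $a_{I_0}\ne0$ minimizing $w(m_I)$. For any $J\ne I_0$ with $a_J\ne0$, every monomial of $\det U_{[i],J}$ other than $m_J$ has weight $>w(m_J)\ge w(m_{I_0})$, and $m_J\ne m_{I_0}$. So $m_{I_0}$ survives with coefficient $a_{I_0}$, a contradiction. With this step supplied your proof is complete and the fallback is unnecessary.
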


The above lemma transforms an arbitrary matrix \(A\) into a structured matrix whose leading principal submatrices are nonsingular up to the rank of \(A\).

The following lemma is a restatement of Lemma~2 from~\cite{KS91}. We replace the notation \(f^{AQ}\) from~\cite{KS91} by \(\mu^{AQ}\) to emphasize its interpretation as a minimal polynomial.

\begin{lemma}[{\cite[Lemma~2]{KS91}}] \label{lem:rank-comp}
Let \(A \in \mathbb{F}^{n\times n}\) have leading principal minors nonzero up to \(A_r\), where \(r < n \) is the rank of \(A\). Let \(S\subseteq \mathbb{F}^*\), and let $Q=\operatorname{diag}(\gamma_1,\ldots,\gamma_n),$ where each \(\gamma_i\) is chosen uniformly from \(S\). Then \(r=\deg(\mu^{AQ})-1\) with probability at least
\[
1-\frac{r(r+1)}{2 \,|S|}.
\]
\end{lemma}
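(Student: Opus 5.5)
The plan is to factor $A$ through its nonsingular leading block, so that the minimal polynomial of $AQ$ reduces to that of an $r\times r$ matrix. The event $\deg(\mu^{AQ})=r+1$ then becomes the non-vanishing of a single polynomial in $\gamma_1,\ldots,\gamma_n$ of total degree at most $r(r+1)/2$. Schwartz--Zippel over $S$ then gives the stated bound.

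\textbf{Step 1 (factorization).} Write $A=\begin{pmatrix}A_r & B\\ C & D\end{pmatrix}$ with $A_r$ nonsingular. Since $\rank(A)=r$, the last $n-r$ rows are combinations of the first $r$, and likewise for columns. Hence $A=L'A_rR'$ with $L'=\begin{pmatrix}I_r\\ X\end{pmatrix}\in\F^{n\times r}$ and $R'=(I_r\mid Y)\in\F^{r\times n}$. Put $K:=A_rR'Q\in\F^{r\times n}$, so that $AQ=L'K$. Also put
\[
G:=KL'=A_r\bigl(Q_1+YQ_2X\bigr)\in\F^{r\times r},
\]
where $Q=\operatorname{diag}(Q_1,Q_2)$.

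\textbf{Step 2 (minimal polynomial of $AQ$).} From $(L'K)^{i+1}=L'G^iK$, for any polynomial $p$ we get $(AQ)\,p(AQ)=L'\,p(G)\,G\,K$. Since $L'$ has full column rank, and $K$ has full row rank when $Q$ is invertible (all $\gamma_i\in S\subseteq\F^*$), this gives $x\,\mu^G(x)\mid\ldots$; more precisely, $x\,p(x)$ annihilates $AQ$ whenever $p(G)G=0$.

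Now assume $G$ is nonsingular. Then $\mu^G(0)\neq 0$. Moreover $AQ$ is singular because $r<n$, so $x\mid\mu^{AQ}$. A short argument with $\mu^{AQ}=x^k q(x)$, $q(0)\ne0$, then shows $\mu^{AQ}=x\,\mu^G$: apply $K\cdot(\,)\cdot L'$ to obtain $G^{k+1}q(G)=0$, so $\mu^G\mid q$; conversely $x\mu^G$ annihilates $AQ$. Hence $\deg\mu^{AQ}=r+1$ as soon as $G$ is nonsingular and nonderogatory, i.e.\ $\deg\mu^G=r$.

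\textbf{Step 3 (the polynomial condition).} $G$ is nonsingular and $\deg\mu^G=r$ if and only if there is a vector $\vv$ with
\[
\det(G)\cdot\det\bigl[\vv,\,G\vv,\,\ldots,\,G^{r-1}\vv\bigr]\neq 0 .
\]
Fix $\vv=\ve_1$. The entries of $G$ are linear in the $\gamma_i$. So $\det G$ has degree $r$, and the Krylov determinant has degree $0+1+\cdots+(r-1)=r(r-1)/2$. The product $P(\gamma)$ therefore has total degree $r(r+1)/2$. By Schwartz--Zippel, if $P$ is not the zero polynomial then $\Pr[P(\gamma)=0]\le r(r+1)/(2|S|)$, which is exactly the claimed bound.

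\textbf{Step 4 (main obstacle: $P\not\equiv0$).} It remains to exhibit one point, over an extension of $\F$ if needed, at which $P$ is nonzero. I would first specialize $Q_2=0$, reducing to $G=A_rQ_1$. Then $\det G=\det A_r\prod_{i\le r}\gamma_i\ne0$ as a polynomial.

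For the Krylov part, I would exploit the hypothesis that all leading principal minors of $A_r$ are nonzero, i.e.\ $A_r=\mathcal{L}\mathcal{U}$ with triangular factors having nonzero diagonals. The aim is to show that the monomial $\gamma_1^{r-1}\gamma_2^{r-2}\cdots\gamma_{r-1}$, or a suitable lowest term under a weighting such as $\gamma_i=t^{w_i}$ with rapidly growing weights, appears in $\det[\ve_1,G\ve_1,\ldots,G^{r-1}\ve_1]$ with coefficient a product of leading minors of $A_r$. The intended mechanism is a triangular pattern: under that weighting, the dominant part of $G^k\ve_1$ lies in $\mathrm{span}(\ve_1,\ldots,\ve_{k+1})$ with a nonzero $(k+1)$-st coordinate, coming from the ratio of consecutive leading minors.

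Getting this leading-term bookkeeping right is the step I expect to be hardest. If it resists, my fallback is an induction on $r$ via the bordered structure of $A_r$.
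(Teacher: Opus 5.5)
Steps 1 and 2 and the degree count in Step 3 are sound. The factorization $AQ=L'K$, $G=KL'$ is correct, $\mu^{AQ}=x\,\mu^{G}$ holds whenever $G$ is nonsingular, and $r+r(r-1)/2=r(r+1)/2$. (One slip: $(AQ)\,p(AQ)=L'\,p(G)\,K$, not $L'\,p(G)\,G\,K$. This does not affect your use of it for $p=\mu^G$.)

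The genuine gap is fixing $\vv=\ve_1$ in Step 3, which makes Step 4 unprovable: under the lemma's hypotheses your polynomial $P$ can vanish identically. Take $n=3$, $r=2$, and let $A$ have leading block $A_2=\begin{pmatrix}1&1\\0&1\end{pmatrix}$ with zero third row and column. Both leading minors equal $1$ and $X=Y=0$. Then $G=A_2Q_1=\begin{pmatrix}\gamma_1&\gamma_2\\0&\gamma_2\end{pmatrix}$ has $\ve_1$ as an eigenvector, so $\det[\ve_1,G\ve_1]\equiv 0$. Yet $\deg\mu^{AQ}=3$ for every $\gamma_1,\gamma_2\in\F^*$, so the lemma holds here while your witness fails. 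More generally, for $r=2$ and $G=A_rQ_1$ the Krylov determinant is exactly $a_{21}\gamma_1$. The coefficients you hoped to extract in Step 4 are therefore not products or ratios of leading minors; they involve entries below the diagonal and can vanish while every leading minor is nonzero. The ``triangular pattern'' does not exist, Schwartz--Zippel has no nonzero polynomial to act on, and the leading-minor hypothesis never enters correctly.

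The repair keeps your degree bound but changes what is proved. First show that $A_rQ_1$ has $r$ distinct eigenvalues over $\F(\gamma_1,\ldots,\gamma_r)$, by the bordering induction you list only as a fallback.
\begin{itemize}
\item Setting $\gamma_i=0$ kills the last column of $A_iQ_i$, so $\det(\lambda I-A_iQ_i)$ specializes to $\lambda\det(\lambda I-A_{i-1}Q_{i-1})$.
\item The second factor has distinct roots by induction, and its constant term is $\pm\det(A_{i-1})\gamma_1\cdots\gamma_{i-1}\neq 0$.
\item Hence the discriminant of the monic polynomial $\det(\lambda I-A_iQ_i)$ is nonzero at $\gamma_i=0$, and therefore nonzero.
\end{itemize}
This is exactly where $\det A_1,\ldots,\det A_{r-1}\neq 0$ is used; $\det A_r\neq 0$ gives $\det G\not\equiv 0$.

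It follows, via your specialization $Q_2=0$, that $\det[\vv,G\vv,\ldots,G^{r-1}\vv]$ is a nonzero polynomial in the entries of $\vv$ and in $\gamma$. So you can fix a $\vv$, depending on $A$ and possibly with entries in an extension field of $\F$ with more than $r$ elements, for which this determinant is a nonzero polynomial in $\gamma$ alone. Its degree in $\gamma$ is still $r(r-1)/2$. Thus $\det G\cdot\det[\vv,G\vv,\ldots,G^{r-1}\vv]$ has degree $r(r+1)/2$, and Schwartz--Zippel over the extension, with sample set $S\subseteq\F^*$, gives the stated bound. In short, the induction must target nonderogatoriness of $G$, not cyclicity of $\ve_1$, which is false in general.
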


The above lemma shows that, after multiplying a random diagonal matrix \(Q\), the degree of the minimal polynomial of \(AQ\) is equal to \(r+1\) with high probability, provided that the leading principal minors of \(A\) are nonzero up to its rank \(r\).

We now present a restatement of their Theorem 3 in ~\cite[Theorem~3]{KS91} with minor improvement in terms of randomness and a slightly modified success probability.

\begin{theorem}\label{thm:rank-comp-runtime}
Let \(A\in \F^{n\times n}\) with \(\rank(A) < n\), and let \(S\subseteq \F^*\). There exists a randomized algorithm which, using \(4n-2\) random elements from \(S\) (corresponding to the entries of \(U\), \(L\), \(Q\), and \(\vv\)), computes the rank of \(A\) using \(O(n)\) matrix-vector multiplications by \(A\) together with
\(
O(n^2\log n\log\log n)
\)
additional arithmetic operations over \(\F\). The algorithm outputs the correct rank with probability at least
\[
1-\frac{n(3n-1)}{2|S|}.
\]
\end{theorem}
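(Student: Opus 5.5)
The algorithm composes \autoref{lem:UAL-toeplitz-transform} and \autoref{lem:rank-comp}. First draw the $2(n-1)$ free parameters of the unit upper and lower triangular Toeplitz matrices $U,L$ from $S$ and form $\widetilde A = UAL$ implicitly, never writing it down. Then draw $n$ entries of $Q=\operatorname{diag}(\gamma_1,\ldots,\gamma_n)$ from $S\subseteq\F^*$ and consider $\widetilde A Q$. Finally draw a vector $\vv\in S^n$, compute the Krylov vectors $(\widetilde A Q)^i\vv$ for $0\le i\le 2n-1$, and recover the minimal polynomial of $\mathcal{K}(\widetilde AQ,\vv)$. Output $r=\deg-1$. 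This uses $2(n-1)+n+n=4n-2$ random elements, matching the count in the statement.

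\textbf{Cost.} A product by $\widetilde AQ$ consists of one diagonal scaling ($O(n)$), two triangular Toeplitz products (each a polynomial multiplication, $O(n\log n\log\log n)$), and one product by $A$. So $2n$ Krylov steps cost $O(n)$ products by $A$ plus $O(n^2\log n\log\log n)$ operations. To stay within this budget, rather than run $n$ coordinate Berlekamp--Massey instances followed by an lcm as in \autoref{lem:compute min poly det.}, I would use a single scalar projection $\vu^T(\widetilde AQ)^i\vv$. Its Berlekamp--Massey costs $\widetilde O(n)$, and the projections cost $O(n^2)$. The randomness for this projection could be folded in by taking $\vu=\ve_1$, or absorbed into the randomness of $U$, so that the element count stays at $4n-2$. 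This is the one point where I am unsure whether the count really comes to $4n-2$; alternatively $\vv$ itself serves as the projection and the Krylov start is fixed.

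\textbf{Correctness and probability.} By \autoref{lem:UAL-toeplitz-transform}, with failure probability at most $r(r+1)/|S|$, the leading principal minors of $\widetilde A$ are nonzero up to $r$. Since $U,L$ are invertible, $\rank\widetilde A=r<n$, so \autoref{lem:rank-comp} applies to $\widetilde A$. Conditioned on the first event, it gives $\deg\mu^{\widetilde AQ}=r+1$ except with probability at most $r(r+1)/(2|S|)$. It remains to show that the Krylov/projection minimal polynomial equals $\mu^{\widetilde AQ}$. The projected sequence's minimal polynomial always divides $\mu^{\widetilde AQ}$. Equality fails only if a nonzero polynomial in the entries of $\vv$, of degree at most $\deg\mu\le r+1$, vanishes. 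This follows from Wiedemann's standard argument via the cyclic decomposition. By Schwartz--Zippel the failure probability is at most about $n/|S|$ after using $r\le n-1$.

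\textbf{Main obstacle.} The main difficulty is making the union bound land exactly on $n(3n-1)/(2|S|)$. The terms are $r(r+1)/|S|+r(r+1)/(2|S|)+(\text{Krylov term})$. With $r\le n-1$ the first two sum to $3n(n-1)/(2|S|)$. The required total is $\tfrac{3n^2-n}{2|S|}=\tfrac{3n(n-1)}{2|S|}+\tfrac{n}{|S|}$, so the Krylov-vector failure must be bounded by $n/|S|$. I would establish that bound using the degree bound $\deg\mu^{\widetilde AQ}\le n$.
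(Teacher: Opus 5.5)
Your architecture (implicit $\widetilde A=UAL$, random $Q$, random start vector $\vv$, output $\deg-1$, and the union bound $r(r+1)/|S|+r(r+1)/(2|S|)+n/|S|$) is exactly the paper's. The step you flagged, however, is a genuine gap.

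The paper gets $4n-2$ random elements, rather than the $5n-2$ of \cite{KS91}, precisely by \emph{not} projecting. It computes the minimal polynomial of the vector sequence $\{(\widetilde AQ)^i\vv\}$ exactly, via the coordinate-wise Berlekamp--Massey and lcm of \autoref{lem:compute min poly det.}. So the only randomness in this step is $\vv$, and the Kaltofen--Pan bound $\Pr[\mu^{\widetilde AQ}_{\vv}\neq\mu^{\widetilde AQ}]\le\deg(\mu^{\widetilde AQ})/|S|\le n/|S|$ applies as is. Your Schwartz--Zippel sentence (a polynomial in $\vv$ alone, of degree at most $\deg\mu$) is that vector-sequence argument. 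It does not cover $\vu^T(\widetilde AQ)^i\vv$, and none of your ways of paying for $\vu$ works:
\begin{itemize}
\item \emph{Fixed $\vu=\ve_1$.} The projected sequence is annihilated by $\mu^{(\widetilde AQ)^T}_{\ve_1}$, which can be a proper divisor of $\mu^{\widetilde AQ}$ for \emph{every} $\vv$. For $M=\operatorname{diag}(1,2)$ one has $\mu^M=(x-1)(x-2)$, yet $\ve_1^TM^i\vv=v_1$ is constant. Nothing in \autoref{lem:UAL-toeplitz-transform} or \autoref{lem:rank-comp} bounds the probability of this over $U,L,Q$. The same objection applies to a fixed Krylov start with $\vv$ as the projection.
\item \emph{Absorbing $\vu$ into $U$.} This reuses randomness already consumed by the preconditioning event, and would need a joint analysis you do not give.
\item \emph{A fresh $\vu\in S^n$.} This costs $n$ more elements. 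The bilinear-projection bound is also $2\deg(\mu^{\widetilde AQ})/|S|$, so the total failure bound becomes $n(3n+1)/(2|S|)$ rather than $n(3n-1)/(2|S|)$.
\end{itemize}

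The fix is to keep the deterministic computation. Your worry about the operation count is fair only at the level of one logarithm. The $n$ fast Berlekamp--Massey runs plus the lcm chain cost $\widetilde O(n^2)$, roughly $O(n^2\log^2 n\log\log n)$ by the paper's own preliminaries. The paper simply folds this into the stated $O(n^2\log n\log\log n)$ together with the $O(n)$ Toeplitz products. That is a polylogarithmic looseness in the time bound. The projection route, by contrast, breaks the randomness count and the success probability, which are the quantitative content of the statement.
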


\begin{proof}
By \autoref{lem:UAL-toeplitz-transform}, after the transformation
\[
\widetilde{A}=UAL,
\]
all leading principal submatrices of \(\widetilde{A}\) are nonsingular up to rank \(r=\rank(A)\) with probability at least
\[
1-\frac{r(r+1)}{|S|}.
\]

Then, by \autoref{lem:rank-comp}, $r=\deg(\mu^{\widetilde{A}Q})-1$ with probability at least
\[
1-\frac{r(r+1)}{2\,|S|}.
\]

Moreover, by the analysis given in Lemma 2 of Kaltofen and Pan~\cite{KP91},
\[
\Pr\!\left(
\mu^{\widetilde{A}Q}
=
\mu^{\widetilde{A}Q}_{\vv}
\right)
\geq
1-\frac{\deg(\mu^{\widetilde{A}Q})}{|S|}
\geq
1-\frac{n}{|S|}.
\]

Since \(r<n\), applying the union bound gives a total failure probability at most
\[
\frac{n(n-1)}{|S|}
+
\frac{n(n-1)}{2\,|S|}
+
\frac{n}{|S|}
=
\frac{n(3n-1)}{2\,|S|}.
\]

We remark that applying \(\widetilde{A}=UAL\) to a vector requires one multiplication of \(A\) by a vector together with two Toeplitz matrix-vector multiplications. Since Toeplitz matrix-vector multiplication reduces to polynomial multiplication, each application of \(\widetilde{A}\) incurs an additional \(O(n\log n\log\log n)\) arithmetic operations in \(\mathbb{F}\) due to Cantor and Kaltofen~\cite{CK91}.

Therefore, the deterministic minimal polynomial computation from \autoref{lem:compute min poly det.} requires \(O(n)\) matrix-vector multiplications and a total of \(O(n^2\log n\log\log n)\) arithmetic operations in \(\mathbb{F}\). Finally, replacing the randomized scalar Wiedemann projection from~\cite{KS91} with the deterministic minimal polynomial computation removes the need for one random vector, reducing the total number of random field elements from \(5n-2\) to \(4n-2\).
\end{proof}

\begin{theorem} \label{thm: all-rank-min-poly-constant-zero}
   Let  $A \in \F^{n \times n}$ and $\vb \in \F^n$, where the minimal polynomial $\mu^A_{\vb}$ of the Krylov sequence $\krylov{A}{\vb}$ has a zero-constant term, i.e. $\mu^A_{\vb}(0)= 0$, then there exists a randomized procedure that determines whether the system $A\vx=\vb$ is consistent and if so, computes a solution $\vx \in \F^n$ in \(\widetilde{O}(n \cdot M(n)+n^2)\) time. The algorithm uses $4n-2$ many random elements from $\F^*$ and succeeds with probability at least 
\[1-\frac{n(3n-1)}{2\,(|\F|-1)},\]
moreover, it never outputs an incorrect solution. 
\end{theorem}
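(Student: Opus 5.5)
We follow the Kaltofen--Saunders route, preconditioning so that the system reduces to a nonsingular leading principal subsystem, with a final verification that makes the algorithm Las Vegas. Take $S=\F^*$ and choose random unit upper and lower triangular Toeplitz matrices $U,L$ and a diagonal $Q$. By \autoref{thm:rank-comp-runtime}, we obtain a candidate rank $r$ using $O(n)$ products with $A$ and $\widetilde{O}(n^2)$ extra operations. The randomness budget is $2(n-1)$ entries for $U,L$ and $n$ for $Q$, with $n$ more available for the final step. Simultaneously, by \autoref{lem:UAL-toeplitz-transform}, with the stated probability the leading principal minors of $\widetilde A=UAL$ are nonzero up to $r$. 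Write $\widetilde A_r$ for the leading $r\times r$ block.

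Next consider the transformed system $\widetilde A\vy=U\vb$, since $A\vx=\vb$ iff $UAL(L^{-1}\vx)=U\vb$ with $U$ invertible. Here $\vx=L\vy$. Because $\rank\widetilde A=r$ and $\widetilde A_r$ is nonsingular, the first $r$ rows of $\widetilde A$ span its row space. Moreover, any solution may be taken with $y_{r+1}=\dots=y_n=0$, because the first $r$ columns span the column space. So solve $\widetilde A_r\vy'=(U\vb)_{[r]}$ by \autoref{thm: invertible system}. A matrix-vector product with $\widetilde A_r$ costs one product with $A$ after padding with zeros and truncating, plus two Toeplitz products, i.e.\ $M(n)+\widetilde{O}(n)$. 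Hence this step takes $\widetilde O(n\cdot M(n)+n^2)$ time.

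Finally, set $\vx=L(\vy',0)$ and check $A\vx=\vb$ with one more product. If the check passes, output $\vx$; this is correct by construction, so the algorithm never errs on a positive answer. If the check fails, we must decide whether the system is inconsistent or randomness failed. Here we need a certificate. Since $\widetilde A_r$ is nonsingular and the rank is $r$, consistency holds iff $U\vb$ lies in the span of the first $r$ columns. When the preconditioning succeeded, every row beyond $r$ is the same fixed combination of the first $r$ rows. Thus the failure of $\widetilde A(\vy',0)=U\vb$ in some row $i>r$ certifies inconsistency, provided the rank is truly $r$. To avoid outputting a wrong ``inconsistent'' verdict, we will use the remaining $n$ random elements to build an explicit witness: a vector $\vw$ with $\vw^\top A=0$ and $\vw^\top\vb\neq0$. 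We compute $\vw$ from the left kernel of $\widetilde A$ by solving a transposed $r\times r$ system $\widetilde A_r^\top\vw'=-(\text{row } i \text{ of } \widetilde A \text{ restricted to the first } r \text{ columns})$, which uses transposed products as in \autoref{thm:transposed-mme}. If verification of the witness fails, we declare failure rather than answer.

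We expect the main obstacle to be this Las Vegas certification of inconsistency. In particular, the rank output of \autoref{thm:rank-comp-runtime} is only Monte Carlo, so we must show that a verified solution or verified witness is produced whenever the events of \autoref{lem:UAL-toeplitz-transform}, \autoref{lem:rank-comp}, and the Krylov projection all hold. The success probability $1-\frac{n(3n-1)}{2(|\F|-1)}$ then follows from the union bound of \autoref{thm:rank-comp-runtime}.
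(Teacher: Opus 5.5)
Up to the final check, your argument is the paper's proof. You precondition to $\widetilde A=UAL$, take $r$ from \autoref{thm:rank-comp-runtime}, and solve $\widetilde A_r\bm y'=(U\vb)_{[r]}$ with \autoref{thm: invertible system} via the padded/truncated operator (the paper's $D_r\widetilde A D_r$). Then you verify. The paper stops there: if $\widetilde A(\bm y',\vzero)\neq U\vb$ it reports inconsistency, accepting a false negative with the stated failure probability. The statement only promises that an output \emph{solution} is never wrong, so your witness step is not needed for this theorem.

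As a strengthening, the witness does work, and the ``main obstacle'' you flag closes directly. Suppose the three good events hold and row $i>r$ mismatches. Put $\bm c=\widetilde A_r^{-\top}(\text{row}_i|_{[r]})^\top$ and $\bm w=\ve_i-(\bm c,\vzero)$. Then $\bm w^\top\widetilde A=0$, because the rank really is $r$ and so row $i$ is a combination of the first $r$ rows. Also $\bm w^\top U\vb=(U\vb)_i-(\widetilde A(\bm y',\vzero))_i\neq 0$. Hence $U^\top\bm w$, not $\bm w$ itself (you omitted this pull-back), certifies that $A\vx=\vb$ is inconsistent.

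The price is that three steps need fast products with $A^\top$: building the right-hand side, running Wiedemann on $\widetilde A_r^\top$, and checking $\widetilde A^\top\bm w=0$. This theorem does not assume such products (contrast \autoref{thm: min-poly-constant-zero}), so for general $A$ you should drop the step and report inconsistency as the paper does. For Vandermonde matrices, \autoref{thm:transposed-mme} supplies them. Your version would then give a genuinely zero-error algorithm in \autoref{thm:interpolation-all-rank-min-poly-constant-zero}, which the paper's one-sided version is not.

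Finally, your randomness count is mislabelled. The last $n$ of the $4n-2$ elements are the starting vector $\vv$ that \autoref{thm:rank-comp-runtime} needs in order to obtain $\mu^{\widetilde AQ}$ from a single Krylov sequence. Your witness construction is deterministic and consumes none of them. So the total is still $4n-2$, but not for the reason you give.
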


\begin{proof}

Let $\widetilde{A}=UAL$ be the preconditioned matrix from \autoref{lem:UAL-toeplitz-transform}. Since $U,L$ are nonsingular, solving $A\vx=\vb$
reduces to solving
\[
\widetilde{A}{\bm {{\bm {\widetilde{x}}}}}=U\vb,
\]
because if ${\bm {\widetilde{x}}}$ satisfies $\widetilde{A}{\bm {\widetilde{x}}}=U\vb$, then
\(\vx=L{\bm {\widetilde{x}}}.\)

Assume
\[
\widetilde{A}=
\begin{pmatrix}
\widetilde{A}_r & B\\
C & D
\end{pmatrix},
\]
where $\operatorname{rank}(\widetilde{A})=r$ and the leading principal block $\widetilde{A}_r$ is nonsingular. Consider the equivalent matrix
\[
\begin{pmatrix}
\widetilde{A}_r & B\\
C & D
\end{pmatrix}
\begin{pmatrix}
I & -\widetilde{A}_r^{-1}B\\
0 & I
\end{pmatrix}
=
\begin{pmatrix}
\widetilde{A}_r & 0\\
C & D-C\widetilde{A}_r^{-1}B
\end{pmatrix}.
\]
Since the right factor is invertible, both matrices have the same rank. As $\operatorname{rank}(\widetilde{A})=r$ and $\widetilde{A}_r$ is nonsingular, it follows that
\[
D=C\widetilde{A}_r^{-1}B.
\]

Now write
\[
{\bm {\widetilde{x}}}=
\begin{pmatrix}
\bm {x_1}\\
\bm{x_2}
\end{pmatrix},
\qquad
U\vb=
\begin{pmatrix}
\bm{b_1}\\
\bm{b_2}
\end{pmatrix},
\]
with $\bm{x_1},\bm{b_1}\in \F^r$. Then
\[
\widetilde{A}{\bm {\widetilde{x}}}=
\begin{pmatrix}
\widetilde{A}_r \bm{x_1}+B\bm{x_2}\\
C\bm{x_1}+C\widetilde{A}_r^{-1}B\bm{x_2}
\end{pmatrix}.
\]
Hence, $\widetilde{A}{\bm {\widetilde{x}}}=U\vb$ iff
\[
\widetilde{A}_r \bm{x_1}=\bm{b_1}-B\bm{x_2}
\qquad\text{and}\qquad
\bm{b_2}=C\widetilde{A}_r^{-1}\bm{b_1}.
\]
The latter condition is independent of $\bm{x_2}$ and is necessary and sufficient for consistency.

We solve it for the particular choice of $\bm{x_2}=\bm{0}$,
\begin{equation} \label{eq:comp-y}
    \widetilde{A}_r \bm{y}=\bm{b_1}.
\end{equation}

Apply deterministic Wiedemann's algorithm from \autoref{thm: invertible system} to solve for the unique vector $\bm y \in \F^r$, since $\widetilde{A}_r$ is nonsingular. Although the algorithm requires matrix-vector multiplication by $\widetilde{A}_r$, we never form $\widetilde{A}_r$ explicitly. Let
\[
D_r=\operatorname{diag}(\underbrace{1,\dots,1}_{r},0,\dots,0)\in \F^{n\times n}.
\]
For $\bm z\in \F^r$, identify $\bm z$ with $(z_1,\dots,z_r,0,\dots,0)^T\in \F^n$. Then,
\[
D_r\widetilde{A}D_r \bm z
=
\begin{pmatrix}
\widetilde{A}_r \bm z\\
\bm 0
\end{pmatrix}.
\]
Hence every matrix-vector multiplication by $\widetilde{A}_r$ required to compute the minimal polynomial of the Krylov sequence $\krylov{\widetilde{A}_r}{\bm b_1}$
 is realized using $\widetilde{A}=UAL$ together with left and right multiplication by $D_r$. Consequently, \autoref{thm: invertible system} computes a unique solution $\bm y$ to \autoref{eq:comp-y}. Now define
\[
\bm{\widetilde{x}}=
\begin{pmatrix}
\bm{y}\\
\bm{0}
\end{pmatrix}.
\]

Then,
\[
\widetilde{A}\bm{\widetilde{x}}=
\begin{pmatrix}
\widetilde{A}_r \bm y\\
C\bm y
\end{pmatrix}
=
\begin{pmatrix}
\bm{ b_1}\\
C\widetilde{A}_r^{-1}\bm{b_1}
\end{pmatrix}.
\]
Therefore,
\[
\widetilde{A}\bm{\widetilde{x}}=U\vb
\iff
\bm{b_2}=C\widetilde{A}_r^{-1}{\bm b_1}.
\]
Thus, consistency is verified implicitly by one additional multiplication with $\widetilde{A}$ and comparison with $U\vb$. Finally, once we have $\bm{\widetilde{x}}$, the corresponding solution to the original system is given by
\[
\bm x=L\bm{\widetilde{x}}.
\]

The probability that the algorithm fails is precisely the failure probability of the rank computation procedure from \autoref{thm:rank-comp-runtime}. The algorithm never returns an incorrect solution. Since $\widetilde{A}=UAL$ and $U$ is nonsingular, the equality $\widetilde{A}\bm{\widetilde{x}}=U\vb$ holds if and only if $A(L\bm{\widetilde{x}})=\vb$. Hence, the output vector $\vx=L\bm{\widetilde{x}}$ solves the system with certainty, irrespective of the randomized rank computation procedure. The only possible error is that the test fails on a consistent system due to failure in rank computation step, in which case the algorithm reports inconsistency. Thus the algorithm is Las Vegas on positive output.
\paragraph{Time complexity.}

We compute the rank $r$ of the preconditioned matrix $\widetilde{A}=UAL$ from \autoref{thm:rank-comp-runtime}. The matrix $\widetilde{A}$ is never formed explicitly; rather, every multiplication by $\widetilde{A}$ is implemented as
\[
\vv \mapsto U(A(L\vv)),
\]
where multiplication by the Toeplitz matrices $U$ and $L$ is carried out via fast polynomial multiplication. Consequently, the rank computation requires $O(n)$ matrix-vector multiplications by $A$ together with $O(n^2\log n\log\log n)$ additional arithmetic operations as given by \autoref{thm:rank-comp-runtime}. Computing the transformed right-hand side $Ub$ similarly takes $O(n\log n\log\log n)$ operations.

To solve \autoref{eq:comp-y}, we apply \autoref{thm: invertible system} to $\widetilde{A}_r$. The matrix $\widetilde{A}_r$ is also never formed explicitly. Instead, 
matrix-vector multiplication by $\widetilde{A}_r$ is realized through the operator
\[
\bm z \mapsto D_r\widetilde{A}D_r \bm z.
\]
Since $\widetilde{A}_r$ is an $r\times r$ nonsingular matrix, \autoref{thm: invertible system} requires $O(r)$ matrix-vector multiplications by $A$. Applying \(U\), \(L\), and \(D_r\) costs at most \(O(rn\log n\log\log n)\) arithmetic operations in total. Recovering the solution vector \(y\) from the minimal polynomial relation requires another \(O(r^2)\) operations. Finally, append $n-r$ zeros to $\bm y$ to form $\bm{\widetilde{x}}$ and recover a solution to the original system as $\bm x=L\bm{\widetilde{x}}$, which takes $O(n\log n\log\log n)$ additional operations. Since $r \le n$, the overall complexity is bounded by $\widetilde{O}(n\cdot M(n)+n^2),$ that is near-quadratic time.
\end{proof}

\begin{theorem} \label{thm:interpolation-all-rank-min-poly-constant-zero}
    Let \(N=d^m\) and $\F$ be a finite field. Given \(N\) distinct data pairs
\[
(\bm\alpha_1,\beta_1),\,
(\bm\alpha_2,\beta_2),\,
\ldots,\,
(\bm\alpha_N,\beta_N),
\]
where \(\bm\alpha_i\in \F^m\) and \(\beta_i\in \F\) for all \(i\in [N]\), and \(\bm\alpha_i\neq \bm\alpha_j\) for all \(i\neq j\). Let \(V\) denote the associated multivariate Vandermonde matrix corresponding to the evaluation points \(\bm\alpha_1,\ldots,\bm\alpha_N\) and monomials of individual degree less than \(d\), and let
\(
\bm\beta=\bigl(\beta_1,\beta_2,\ldots,\beta_N\bigr).
\)
Suppose the minimal polynomial of the Krylov sequence $\krylov{V}{\bm\beta}$  has a zero constant term. Then, there is a randomized procedure to determine if there exists an $m$-variate polynomial \(f(\vy)\in\F[\vy]\) with individual degree less than \(d\), satisfying
\[
f(\bm\alpha_i)=\beta_i
\qquad \text{for all } i\in [N],
\]
and if such an $f$ exists, it can be computed in $N^{2+o(1)}\poly(m,d,\log|\F|)$ time, upon assumption \(d=\omega(1)\), using $4N-2$ random field elements from $\F^*$ with success probability at least $$1-\frac{N(3N-1)}{2\,(|\F|-1)},$$ furthermore, it never outputs an incorrect solution.

\end{theorem}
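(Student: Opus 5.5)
This theorem is the interpolation specialization of \autoref{thm: all-rank-min-poly-constant-zero}. The plan is to apply that theorem with $A=V$, $n=N$ and $\vb=\bm\beta$, and then plug in the multivariate multipoint evaluation bounds. First, by construction the interpolation problem is exactly the linear system $V\bm c=\bm\beta$. Here $\bm c\in\F^{N}$ is the coefficient vector of $f$ in the monomial basis of individual degree less than $d$. A polynomial $f$ with $f(\bm\alpha_i)=\beta_i$ for all $i$ exists if and only if this system is consistent, and any solution $\bm c$ gives such an $f$. The hypothesis that $\krymin{V}{\bm\beta}(0)=0$ is precisely the hypothesis of \autoref{thm: all-rank-min-poly-constant-zero}. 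Deciding this case split in the first place costs one call to \autoref{lem:compute min poly det.}, which is within budget.

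Second, we check the cost model. The procedure of \autoref{thm: all-rank-min-poly-constant-zero} touches $V$ only through the map $\vv\mapsto V\vv$, composed with Toeplitz products $U,L$, the diagonal scaling $Q$ from \autoref{thm:rank-comp-runtime}, and the coordinate projections $D_r$. By \autoref{lem:vandermonde-matvec}, $M(N)=N^{1+o(1)}\poly(m,d,\log|\F|)$ for $d=\omega(1)$. The products by $U$, $L$, $Q$ and $D_r$ add only $\widetilde O(N)$ per application. So each application of $\widetilde V=UVL$, or of its restriction $\widetilde V_r$, costs $N^{1+o(1)}\poly(m,d,\log|\F|)$. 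Substituting into the bound $\widetilde O(N\cdot M(N)+N^2)$ gives $N^{2+o(1)}\poly(m,d,\log|\F|)$, since polylogarithmic factors in $N$ are absorbed into $N^{o(1)}$. No transposed product is needed in this branch, so \autoref{thm:transposed-mme} is not required.

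Third, randomness and correctness. We take $S=\F^*$ in \autoref{thm:rank-comp-runtime}, which uses $4N-2$ random elements, and in \autoref{lem:UAL-toeplitz-transform}. \autoref{thm: all-rank-min-poly-constant-zero} then gives success probability at least $1-\frac{N(3N-1)}{2(|\F|-1)}$. The Las Vegas property is inherited: the output $\bm c=L\widetilde{\bm c}$ is returned only after the explicit check $\widetilde V\widetilde{\bm c}=U\bm\beta$. Since $U$ is invertible, this check is equivalent to $V\bm c=\bm\beta$, so any returned $f$ genuinely interpolates. The only failure mode is a false report of inconsistency.

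The main obstacle is bookkeeping rather than mathematics. We must ensure that \autoref{lem:vandermonde-matvec} legitimately supplies $M(N)$ for every matrix actually queried. This holds because each query is a product by the full $V$ sandwiched between structured operators, so the only multipoint evaluation calls are on $V$ itself. We must also confirm that the $\omega(1)$ growth of $d$ makes the $\poly(m,d,\log|\F|)$ factor from \cite{BGGKU24} uniform across all $O(N)$ calls.
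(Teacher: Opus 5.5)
Your proposal is correct and follows the paper's own proof: both apply \autoref{thm: all-rank-min-poly-constant-zero} with $A=V$ and $\vb=\bm\beta$, substitute $M(N)=N^{1+o(1)}\poly(m,d,\log|\F|)$ from \cite{BGGKU24}, and inherit the $4N-2$ random elements, the success probability, and the Las Vegas guarantee. Your extra bookkeeping is consistent with the paper, namely that only forward products by $V$ occur (so no transposition is needed) and that $S=\F^*$; the paper additionally remarks that one passes to a field extension when $|\F|$ is too small for the probability bound to be meaningful.
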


\begin{proof}

This follows directly from \autoref{thm: all-rank-min-poly-constant-zero} by
substituting $M(N)= N^{1+o(1)}\poly(m,d,\log |\F|)$, using the result of
Bhargava et al.~\cite{BGGKU24}. By \autoref{thm: all-rank-min-poly-constant-zero} the algorithm never outputs an incorrect solution, so the only possible error is to declare a consistent system inconsistent. Taking $|\F| = \Omega(N^2)$ makes the failure probability small. If $|\F|$ is small, one works over a field extension. When the original system is consistent, independent repetition decreases the failure probability exponentially.

\end{proof}

Note that the above theorem also generalizes the constant co-rank case in \autoref{thm:interpolation-close-full-rank}. In the regime where the rank is guaranteed to be sufficiently close to full rank, the system can be solved deterministically without requiring any randomness, even though the exact rank is not known apriori.

\paragraph{Acknowledgment}
The author is grateful to Mrinal Kumar for several insightful discussions and suggestions throughout the course of this work including presentation of the draft. The author acknowledges M.K. for providing the initial ideas for this project, in particular for the nonsingular system, although the approach was randomized rather than deterministic~(see \autoref{thm: invertible system}). Generative AI has been used for literature study and presentation of the draft.

\bibliography{interpolation}

@article {BGGKU24,
    AUTHOR = {Bhargava, Vishwas and Ghosh, Sumanta and Guo, Zeyu and Kumar,
              Mrinal and Umans, Chris},
     TITLE = {Fast multivariate multipoint evaluation over all finite
              fields},
   JOURNAL = {J. ACM},
  FJOURNAL = {Journal of the ACM},
    VOLUME = {71},
      YEAR = {2024},
    NUMBER = {3},
     PAGES = {Art. 22, 32},
      ISSN = {0004-5411,1557-735X},
   MRCLASS = {68Q11 (11T71 12-08)},
  MRNUMBER = {4763252},
MRREVIEWER = {Dorothy\ Bollman},
       DOI = {10.1145/3652025},
       
}

@article {BGKM23,
    AUTHOR = {Bhargava, Vishwas and Ghosh, Sumanta and Kumar, Mrinal and
              Mohapatra, Chandra Kanta},
     TITLE = {Fast, algebraic multivariate multipoint evaluation in small
              characteristic and applications},
   JOURNAL = {J. ACM},
  FJOURNAL = {Journal of the ACM},
    VOLUME = {70},
      YEAR = {2023},
    NUMBER = {6},
     PAGES = {Art. 42, 46},
      ISSN = {0004-5411,1557-735X},
   MRCLASS = {68Q06 (68P05)},
  MRNUMBER = {4679508},
       DOI = {10.1145/3625226},
      
}

@article {KKB88,
    AUTHOR = {Kaminski, Michael and Kirkpatrick, David G. and Bshouty, Nader
              H.},
     TITLE = {Addition requirements for matrix and transposed matrix
              products},
   JOURNAL = {J. Algorithms},
  FJOURNAL = {Journal of Algorithms},
    VOLUME = {9},
      YEAR = {1988},
    NUMBER = {3},
     PAGES = {354--364},
      ISSN = {0196-6774},
   MRCLASS = {68Q25 (15-04 65F05)},
  MRNUMBER = {955144},
MRREVIEWER = {D.\ Z.\ Spicer},
       DOI = {10.1016/0196-6774(88)90026-0},
      
}

@article{T52,
  title={A general network theorem, with applications},
  author={Tellegen, Bernard DH},
  journal={Philips Res Rep},
  volume={7},
  pages={256--269},
  year={1952}
}

@ARTICLE{w86,
  author={Wiedemann, D.},
  journal={IEEE Transactions on Information Theory}, 
  title={Solving sparse linear equations over finite fields}, 
  year={1986},
  volume={32},
  number={1},
  pages={54-62},
  doi={10.1109/TIT.1986.1057137}}

@article {KU11,
    AUTHOR = {Kedlaya, Kiran S. and Umans, Christopher},
     TITLE = {Fast polynomial factorization and modular composition},
   JOURNAL = {SIAM J. Comput.},
  FJOURNAL = {SIAM Journal on Computing},
    VOLUME = {40},
      YEAR = {2011},
    NUMBER = {6},
     PAGES = {1767--1802},
      ISSN = {0097-5397,1095-7111},
   MRCLASS = {11Y16 (11Y05 12E05 68W30 68W40)},
  MRNUMBER = {2863194},
MRREVIEWER = {Samuel\ S.\ Wagstaff, Jr.},
       DOI = {10.1137/08073408X},
      
}

@incollection {KL24,
    AUTHOR = {Kinoshita, Yasunori and Li, Baitian},
     TITLE = {Power series composition in near-linear time},
 BOOKTITLE = {2024 {IEEE} 65th {A}nnual {S}ymposium on {F}oundations of
              {C}omputer {S}cience---{FOCS} 2024},
     PAGES = {2180--2185},
 PUBLISHER = {IEEE Computer Soc., Los Alamitos, CA},
      YEAR = {[2024] \copyright 2024},
      ISBN = {979-8-3315-1674-1},
   MRCLASS = {68W30 (13F25)},
  MRNUMBER = {4849324},
       DOI = {10.1109/FOCS61266.2024.00127},
       URL = {https://doi.org/10.1109/FOCS61266.2024.00127},
}

@book {BCS97,
    AUTHOR = {B\"urgisser, Peter and Clausen, Michael and Shokrollahi, M.
              Amin},
     TITLE = {Algebraic complexity theory},
    SERIES = {Grundlehren der mathematischen Wissenschaften [Fundamental
              Principles of Mathematical Sciences]},
    VOLUME = {315},
      NOTE = {With the collaboration of Thomas Lickteig},
 PUBLISHER = {Springer-Verlag, Berlin},
      YEAR = {1997},
     PAGES = {xxiv+618},
      ISBN = {3-540-60582-7},
   MRCLASS = {68-02 (12Y05 65Y20 68Q05 68Q15 68Q25 68Q40)},
  MRNUMBER = {1440179},
MRREVIEWER = {Alexander\ I.\ Barvinok},
       DOI = {10.1007/978-3-662-03338-8},
       URL = {https://doi.org/10.1007/978-3-662-03338-8},
}

@inproceedings{BLS03,
author = {Bostan, A. and Lecerf, G. and Schost, \'{E}.},
title = {Tellegen's principle into practice},
year = {2003},
isbn = {1581136412},
publisher = {Association for Computing Machinery},
address = {New York, NY, USA},
doi = {10.1145/860854.860870},
booktitle = {Proceedings of the 2003 International Symposium on Symbolic and Algebraic Computation},
pages = {37–44},
numpages = {8},
location = {Philadelphia, PA, USA},
series = {ISSAC '03}
}

@incollection {KS91,
    AUTHOR = {Kaltofen, Erich and Saunders, B. David},
     TITLE = {On {W}iedemann's method of solving sparse linear systems},
 BOOKTITLE = {Applied algebra, algebraic algorithms and error-correcting
              codes ({N}ew {O}rleans, {LA}, 1991)},
    SERIES = {Lecture Notes in Comput. Sci.},
    VOLUME = {539},
     PAGES = {29--38},
 PUBLISHER = {Springer, Berlin},
      YEAR = {1991},
      ISBN = {3-540-54522-0},
   MRCLASS = {68Q40},
  MRNUMBER = {1229306},
       DOI = {10.1007/3-540-54522-0\_93}
}

@inproceedings{KP91,
author = {Kaltofen, Erich and Pan, Victor},
title = {Processor efficient parallel solution of linear systems over an abstract field},
year = {1991},
isbn = {0897914384},
publisher = {Association for Computing Machinery},
address = {New York, NY, USA},
doi = {10.1145/113379.113396},
booktitle = {Proceedings of the Third Annual ACM Symposium on Parallel Algorithms and Architectures},
pages = {180–191},
numpages = {12},
location = {Hilton Head, South Carolina, USA},
series = {SPAA '91}
}

@article {CK91,
    AUTHOR = {Cantor, David G. and Kaltofen, Erich},
     TITLE = {On fast multiplication of polynomials over arbitrary algebras},
   JOURNAL = {Acta Inform.},
  FJOURNAL = {Acta Informatica},
    VOLUME = {28},
      YEAR = {1991},
    NUMBER = {7},
     PAGES = {693--701},
      ISSN = {0001-5903,1432-0525},
   MRCLASS = {68Q40},
  MRNUMBER = {1129288},
       DOI = {10.1007/BF01178683},
}

@article {GS00,
    AUTHOR = {Gasca, Mariano and Sauer, Thomas},
     TITLE = {Polynomial interpolation in several variables},
      NOTE = {Multivariate polynomial interpolation},
   JOURNAL = {Adv. Comput. Math.},
  FJOURNAL = {Advances in Computational Mathematics},
    VOLUME = {12},
      YEAR = {2000},
    NUMBER = {4},
     PAGES = {377--410},
      ISSN = {1019-7168,1572-9044},
   MRCLASS = {41A15 (41-02 41A25 41A63 65D10)},
  MRNUMBER = {1768957},
MRREVIEWER = {Peter\ Alfeld},
       DOI = {10.1023/A:1018981505752},
}

@article {ChungYao1977,
    AUTHOR = {Chung, K. C. and Yao, T. H.},
     TITLE = {On lattices admitting unique {L}agrange interpolations},
   JOURNAL = {SIAM J. Numer. Anal.},
  FJOURNAL = {SIAM Journal on Numerical Analysis},
    VOLUME = {14},
      YEAR = {1977},
    NUMBER = {4},
     PAGES = {735--743},
      ISSN = {0036-1429},
   MRCLASS = {41A05 (65D05)},
  MRNUMBER = {445158},
MRREVIEWER = {P.\ G.\ Ciarlet},
       DOI = {10.1137/0714050},
}

@article {deBoorRon1990,
    AUTHOR = {de Boor, Carl and Ron, Amos},
     TITLE = {On multivariate polynomial interpolation},
   JOURNAL = {Constr. Approx.},
  FJOURNAL = {Constructive Approximation. An International Journal for
              Approximations and Expansions},
    VOLUME = {6},
      YEAR = {1990},
    NUMBER = {3},
     PAGES = {287--302},
      ISSN = {0176-4276,1432-0940},
   MRCLASS = {41A05 (41A10)},
  MRNUMBER = {1054756},
MRREVIEWER = {S.\ D.\ Riemenschneider},
       DOI = {10.1007/BF01890412},
       
}

@inproceedings{Fid72,
author = {Fiduccia, Charles M.},
title = {Polynomial evaluation via the division algorithm the fast Fourier transform revisited},
year = {1972},
isbn = {9781450374576},
publisher = {Association for Computing Machinery},
address = {New York, NY, USA},

doi = {10.1145/800152.804900},
booktitle = {Proceedings of the Fourth Annual ACM Symposium on Theory of Computing},
pages = {88–93},
numpages = {6},
location = {Denver, Colorado, USA},
series = {STOC '72}
}

@article {BM74,
    AUTHOR = {Borodin, A. and Moenck, R.},
     TITLE = {Fast modular transforms},
   JOURNAL = {J. Comput. System Sci.},
  FJOURNAL = {Journal of Computer and System Sciences},
    VOLUME = {8},
      YEAR = {1974},
     PAGES = {366--386},
      ISSN = {0022-0000},
   MRCLASS = {68A20 (12-04)},
  MRNUMBER = {371144},
MRREVIEWER = {J.\ Morgenstern},
       DOI = {10.1016/S0022-0000(74)80029-2},
       URL = {https://doi.org/10.1016/S0022-0000(74)80029-2},
}

@book {Lorentz1992,
    AUTHOR = {Lorentz, Rudolph A.},
     TITLE = {Multivariate {B}irkhoff interpolation},
    SERIES = {Lecture Notes in Mathematics},
    VOLUME = {1516},
 PUBLISHER = {Springer-Verlag, Berlin},
      YEAR = {1992},
     PAGES = {x+192},
      ISBN = {3-540-55870-5},
   MRCLASS = {41A05 (41-02 41A63 65D05 65N30)},
  MRNUMBER = {1222648},
MRREVIEWER = {Petru\ P.\ Blaga},
       DOI = {10.1007/BFb0088788},
      
}

@article {DynFloater2014,
    AUTHOR = {Dyn, Nira and Floater, Michael S.},
     TITLE = {Multivariate polynomial interpolation on lower sets},
   JOURNAL = {J. Approx. Theory},
  FJOURNAL = {Journal of Approximation Theory},
    VOLUME = {177},
      YEAR = {2014},
     PAGES = {34--42},
      ISSN = {0021-9045,1096-0430},
   MRCLASS = {41A05 (41A10 41A63)},
  MRNUMBER = {3128402},
MRREVIEWER = {Stefano\ De Marchi},
       DOI = {10.1016/j.jat.2013.09.008},
       
}

@misc{Hofmann2025,
 author        = {Hofmann, Phil-Alexander and  Michael Hecht},
 title         = {The {Fast Newton Transform}: Interpolation in downward closed polynomial spaces},
 year          = {2025},
 eprint        = {2505.14909},
  archivePrefix = {arXiv},
    doi={https://doi.org/10.48550/arXiv.2505.14909},
}

@incollection {Zippel1979,
    AUTHOR = {Zippel, Richard},
     TITLE = {Probabilistic algorithms for sparse polynomials},
 BOOKTITLE = {Symbolic and algebraic computation ({EUROSAM} '79, {I}nternat.
              {S}ympos., {M}arseille, 1979)},
    SERIES = {Lecture Notes in Comput. Sci.},
    VOLUME = {72},
     PAGES = {216--226},
 PUBLISHER = {Springer, Berlin-New York},
      YEAR = {1979},
      ISBN = {3-540-09519-5},
   MRCLASS = {68C20},
  MRNUMBER = {575692},
  DOI={10.1007/3-540-09519-5_73},
}

@inproceedings{BenOrTiwari1988,
author = {Ben-Or, Michael and Tiwari, Prasoon},
title = {A deterministic algorithm for sparse multivariate polynomial interpolation},
year = {1988},
isbn = {0897912640},
publisher = {Association for Computing Machinery},
address = {New York, NY, USA},

doi = {10.1145/62212.62241},
booktitle = {Proceedings of the Twentieth Annual ACM Symposium on Theory of Computing},
pages = {301–309},
numpages = {9},
location = {Chicago, Illinois, USA},
series = {STOC '88}
}

@article {Zippel1990,
    AUTHOR = {Zippel, Richard},
     TITLE = {Interpolating polynomials from their values},
   JOURNAL = {J. Symbolic Comput.},
    VOLUME = {9},
      YEAR = {1990},
    NUMBER = {3},
     PAGES = {375--403},
      ISSN = {0747-7171,1095-855X},
   MRCLASS = {68Q40 (12Y05 65D05)},
  MRNUMBER = {1056633},
       DOI = {10.1016/S0747-7171(08)80018-1},
}

@techreport{GiebrechtLabahnLee2002,
  author      = {Giesbrecht, Mark and Labahn, George and Lee, Wen-shin},
  title       = {On the equivalence between {Prony's} and {Ben-Or/Tiwari's} methods},
  institution = {University of Waterloo},
  year        = {2002},
  url={https://win.uantwerpen.be/~wlee/pub/CS2002-23.pdf},
}

@incollection {KaltofenYagati1988,
    AUTHOR = {Kaltofen, Erich and Yagati, Lakshman},
     TITLE = {Improved sparse multivariate polynomial interpolation
              algorithms},
 BOOKTITLE = {Symbolic and algebraic computation ({R}ome, 1988)},
    SERIES = {Lecture Notes in Comput. Sci.},
    VOLUME = {358},
     PAGES = {467--474},
 PUBLISHER = {Springer, Berlin},
      YEAR = {1989},
      ISBN = {3-540-51084-2},
   MRCLASS = {65D05},
  MRNUMBER = {1034749},
       DOI = {10.1007/3-540-51084-2\_44},
     
}

@article {KaltofenLee2003,
    AUTHOR = {Kaltofen, Erich and Lee, Wen-shin},
     TITLE = {Early termination in sparse interpolation algorithms},
      NOTE = {International Symposium on Symbolic and Algebraic Computation
              (ISSAC'2002) (Lille)},
   JOURNAL = {J. Symbolic Comput.},
  FJOURNAL = {Journal of Symbolic Computation},
    VOLUME = {36},
      YEAR = {2003},
    NUMBER = {3-4},
     PAGES = {365--400},
      ISSN = {0747-7171,1095-855X},
   MRCLASS = {13P05 (68W30)},
  MRNUMBER = {2004034},
MRREVIEWER = {Andrei\ Kelarev},
       DOI = {10.1016/S0747-7171(03)00088-9},

     
}

@article {BorodinTiwari1991,
    AUTHOR = {Borodin, Allan and Tiwari, Prasoon},
     TITLE = {On the decidability of sparse univariate polynomial
              interpolation},
   JOURNAL = {Comput. Complexity},
  FJOURNAL = {Computational Complexity},
    VOLUME = {1},
      YEAR = {1991},
    NUMBER = {1},
     PAGES = {67--90},
      ISSN = {1016-3328,1420-8954},
   MRCLASS = {12L05 (03B25 12Y05 68Q15 68Q40)},
  MRNUMBER = {1113535},
MRREVIEWER = {John\ Pedersen},
       DOI = {10.1007/BF01200058},
      
}

@incollection {GrigorievKarpinski1993,
    AUTHOR = {Grigoriev, Dima and Karpinski, Marek},
     TITLE = {A zero-test and an interpolation algorithm for the shifted
              sparse polynomials},
 BOOKTITLE = {Applied algebra, algebraic algorithms and error-correcting
              codes ({S}an {J}uan, {PR}, 1993)},
    SERIES = {Lecture Notes in Comput. Sci.},
    VOLUME = {673},
     PAGES = {162--169},
 PUBLISHER = {Springer, Berlin},
      YEAR = {1993},
      ISBN = {3-540-56686-4},
   MRCLASS = {12Y05 (68Q20)},
  MRNUMBER = {1251977},
MRREVIEWER = {Maurice\ Mignotte},
       DOI = {10.1007/3-540-56686-4\_41},
     
}

@inproceedings {BlaserJindal2014,
    AUTHOR = {Bl\"aser, Markus and Jindal, Gorav},
     TITLE = {A new deterministic algorithm for sparse multivariate
              polynomial interpolation},
 BOOKTITLE = {I{SSAC} 2014---{P}roceedings of the 39th {I}nternational
              {S}ymposium on {S}ymbolic and {A}lgebraic {C}omputation},
     PAGES = {51--58},
 PUBLISHER = {ACM, New York},
      YEAR = {2014},
      ISBN = {978-1-4503-2501-1},
   MRCLASS = {68W30 (41A05 65D05)},
  MRNUMBER = {3239908},
       DOI = {10.1145/2608628.2608648},
      
}

@article {HuangGao2020,
    AUTHOR = {Huang, Qiao-Long and Gao, Xiao-Shan},
     TITLE = {Faster interpolation algorithms for sparse multivariate
              polynomials given by straight-line programs},
   JOURNAL = {J. Symbolic Comput.},
  FJOURNAL = {Journal of Symbolic Computation},
    VOLUME = {101},
      YEAR = {2020},
     PAGES = {367--386},
      ISSN = {0747-7171,1095-855X},
   MRCLASS = {65D05 (65C05 68W40)},
  MRNUMBER = {4109723},
       DOI = {10.1016/j.jsc.2019.10.005},     
}

@article {Huang2023,
    AUTHOR = {Huang, Qiao-Long},
     TITLE = {Sparse polynomial interpolation based on derivatives},
   JOURNAL = {J. Symbolic Comput.},
  FJOURNAL = {Journal of Symbolic Computation},
    VOLUME = {114},
      YEAR = {2023},
     PAGES = {359--375},
      ISSN = {0747-7171,1095-855X},
   MRCLASS = {68Q04 (11T06 12E20)},
  MRNUMBER = {4446195},
MRREVIEWER = {\'Isabel\ Pirsic},
       DOI = {10.1016/j.jsc.2022.06.002},
      
}

@incollection {HuangGao2017,
    AUTHOR = {Huang, Qiao-Long and Gao, Xiao-Shan},
     TITLE = {Sparse polynomial interpolation with finitely many values for
              the coefficients},
 BOOKTITLE = {Computer algebra in scientific computing},
    SERIES = {Lecture Notes in Comput. Sci.},
    VOLUME = {10490},
     PAGES = {196--209},
 PUBLISHER = {Springer, Cham},
      YEAR = {2017},
      ISBN = {978-3-319-66320-3; 978-3-319-66319-7},
   MRCLASS = {65D05 (41A05 41A10)},
  MRNUMBER = {3706565},
       DOI = {10.1007/978-3-319-66320-3\_15},
       
}

@article {Werther1994,
    AUTHOR = {Werther, Kai},
     TITLE = {The complexity of sparse polynomial interpolation over finite
              fields},
   JOURNAL = {Appl. Algebra Engrg. Comm. Comput.},
  FJOURNAL = {Applicable Algebra in Engineering, Communication and
              Computing},
    VOLUME = {5},
      YEAR = {1994},
    NUMBER = {2},
     PAGES = {91--103},
      ISSN = {0938-1279,1432-0622},
   MRCLASS = {12Y05 (11T06 68Q40)},
  MRNUMBER = {1264459},
MRREVIEWER = {Jeffrey\ O.\ Shallit},
       DOI = {10.1007/BF01438278},
   
}

@article {GrigorievKarpinskiSinger1990,
    AUTHOR = {Grigoriev, Dima Yu. and Karpinski, Marek and Singer, Michael
              F.},
     TITLE = {Fast parallel algorithms for sparse multivariate polynomial
              interpolation over finite fields},
   JOURNAL = {SIAM J. Comput.},
  FJOURNAL = {SIAM Journal on Computing},
    VOLUME = {19},
      YEAR = {1990},
    NUMBER = {6},
     PAGES = {1059--1063},
      ISSN = {0097-5397},
   MRCLASS = {11Y16 (11T06 68Q22)},
  MRNUMBER = {1069099},
MRREVIEWER = {Joachim\ von zur Gathen},
       DOI = {10.1137/0219073},
    
}

@misc{Grenet2023,
  author = {Grenet, Bruno},
  title  = {Sparse polynomial interpolation},
  note   = {Presentation at {ICALP} 2023 Algebraic Complexity Theory workshop},
  year   = {2023},
  url={https://www.dcs.warwick.ac.uk/~u2270030/act2023icalp/slides/ICALP2023ACT_grenet.pdf},
}

@article {SauerXu1995b,
    AUTHOR = {Sauer, Thomas and Xu, Yuan},
     TITLE = {On multivariate {H}ermite interpolation},
   JOURNAL = {Adv. Comput. Math.},
  FJOURNAL = {Advances in Computational Mathematics},
    VOLUME = {4},
      YEAR = {1995},
    NUMBER = {3},
     PAGES = {207--259},
      ISSN = {1019-7168,1572-9044},
   MRCLASS = {41A63 (65D05)},
  MRNUMBER = {1357718},
MRREVIEWER = {Peter\ Alfeld},
       DOI = {10.1007/BF03177515},
       
}

@article {vanderHoevenSchost2013,
    AUTHOR = {van der Hoeven, Joris and Schost, \'Eric},
     TITLE = {Multi-point evaluation in higher dimensions},
   JOURNAL = {Appl. Algebra Engrg. Comm. Comput.},
  FJOURNAL = {Applicable Algebra in Engineering, Communication and
              Computing},
    VOLUME = {24},
      YEAR = {2013},
    NUMBER = {1},
     PAGES = {37--52},
      ISSN = {0938-1279,1432-0622},
   MRCLASS = {68W30},
  MRNUMBER = {3011318},
       DOI = {10.1007/s00200-012-0179-3},
       URL = {https://doi.org/10.1007/s00200-012-0179-3},
}

@incollection {MollerBuchberger1982,
    AUTHOR = {M\"oller, H. M. and Buchberger, B.},
     TITLE = {The construction of multivariate polynomials with preassigned
              zeros},
 BOOKTITLE = {Computer algebra ({M}arseille, 1982)},
    SERIES = {Lecture Notes in Comput. Sci.},
    VOLUME = {144},
     PAGES = {24--31},
 PUBLISHER = {Springer, Berlin-New York},
      YEAR = {1982},
      ISBN = {3-540-11607-9},
   MRCLASS = {12-04 (12E05 68C20)},
  MRNUMBER = {680050},
MRREVIEWER = {Daniel\ Lazard},
}

@article {MarinariMollerMora1993,
    AUTHOR = {Marinari, M. G. and M\"oller, H. M. and Mora, T.},
     TITLE = {Gr\"obner bases of ideals defined by functionals with an
              application to ideals of projective points},
   JOURNAL = {Appl. Algebra Engrg. Comm. Comput.},
  FJOURNAL = {Applicable Algebra in Engineering, Communication and
              Computing},
    VOLUME = {4},
      YEAR = {1993},
    NUMBER = {2},
     PAGES = {103--145},
      ISSN = {0938-1279,1432-0622},
   MRCLASS = {13P10 (15A04 68Q25)},
  MRNUMBER = {1223853},
MRREVIEWER = {Joachim\ Apel},
       DOI = {10.1007/BF01386834},
    
}

@incollection {Sauer2001,
    AUTHOR = {Sauer, Thomas},
     TITLE = {Polynomial interpolation of minimal degree and {G}r\"obner
              bases},
 BOOKTITLE = {Gr\"obner bases and applications ({L}inz, 1998)},
    SERIES = {London Math. Soc. Lecture Note Ser.},
    VOLUME = {251},
     PAGES = {483--494},
 PUBLISHER = {Cambridge Univ. Press, Cambridge},
      YEAR = {1998},
      ISBN = {0-521-63298-6},
   MRCLASS = {13P10 (41A10)},
  MRNUMBER = {1708898},
MRREVIEWER = {G\'abor\ Hetyei},
       DOI = {10.1080/13556219872038},
       
}

@article {ZilicVranesic2002,
    AUTHOR = {Zilic, Zeljko and Vranesic, Zvonko G.},
     TITLE = {A deterministic multivariate interpolation algorithm for small
              finite fields},
   JOURNAL = {IEEE Trans. Comput.},
  FJOURNAL = {Institute of Electrical and Electronics Engineers.
              Transactions on Computers},
    VOLUME = {51},
      YEAR = {2002},
    NUMBER = {9},
     PAGES = {1100--1105},
      ISSN = {0018-9340,1557-9956},
   MRCLASS = {94B35},
  MRNUMBER = {1922596},
       DOI = {10.1109/TC.2002.1032628},
       
}

@article {N19,
    AUTHOR = {Neidinger, Richard D.},
     TITLE = {Multivariate polynomial interpolation in {N}ewton forms},
   JOURNAL = {SIAM Rev.},
  FJOURNAL = {SIAM Review},
    VOLUME = {61},
      YEAR = {2019},
    NUMBER = {2},
     PAGES = {361--381},
      ISSN = {1095-7200,0036-1445},
   MRCLASS = {41A05 (41A10 41A63 65D05)},
  MRNUMBER = {3947286},
MRREVIEWER = {Stefano\ De Marchi},
       DOI = {10.1137/17M1124188},
       
}

@article {ArnoldGiesroche16,
    AUTHOR = {Arnold, Andrew and Giesbrecht, Mark and Roche, Daniel S.},
     TITLE = {Faster sparse multivariate polynomial interpolation of
              straight-line programs},
   JOURNAL = {J. Symbolic Comput.},
  FJOURNAL = {Journal of Symbolic Computation},
    VOLUME = {75},
      YEAR = {2016},
     PAGES = {4--24},
      ISSN = {0747-7171,1095-855X},
   MRCLASS = {68W30 (41A05 65D05 68W20)},
  MRNUMBER = {3451329},
       DOI = {10.1016/j.jsc.2015.11.005},
      
}

@article {GargSchost09,
    AUTHOR = {Garg, Sanchit and Schost, \'Eric},
     TITLE = {Interpolation of polynomials given by straight-line programs},
   JOURNAL = {Theoret. Comput. Sci.},
  FJOURNAL = {Theoretical Computer Science},
    VOLUME = {410},
      YEAR = {2009},
    NUMBER = {27-29},
     PAGES = {2659--2662},
      ISSN = {0304-3975,1879-2294},
   MRCLASS = {68W30},
  MRNUMBER = {2531108},
       DOI = {10.1016/j.tcs.2009.03.030},
     
}

@article {Man95,
    AUTHOR = {Mansour, Yishay},
     TITLE = {Randomized interpolation and approximation of sparse
              polynomials},
   JOURNAL = {SIAM J. Comput.},
  FJOURNAL = {SIAM Journal on Computing},
    VOLUME = {24},
      YEAR = {1995},
    NUMBER = {2},
     PAGES = {357--368},
      ISSN = {0097-5397},
   MRCLASS = {65D05 (41A10)},
  MRNUMBER = {1320215},
MRREVIEWER = {Antonio\ L\'opez-Carmona},
       DOI = {10.1137/S0097539792239291},
     
}

@inproceedings{ArnoldRoche14,
author = {Arnold, Andrew and Roche, Daniel S.},
title = {Multivariate sparse interpolation using randomized Kronecker substitutions},
year = {2014},
isbn = {9781450325011},
publisher = {Association for Computing Machinery},
address = {New York, NY, USA},

doi = {10.1145/2608628.2608674},
booktitle = {Proceedings of the 39th International Symposium on Symbolic and Algebraic Computation},
pages = {35–42},
numpages = {8},
location = {Kobe, Japan},
series = {ISSAC '14}
}

@article {HoevenLec25,
    AUTHOR = {van der Hoeven, Joris and Lecerf, Gr\'egoire},
     TITLE = {Fast interpolation of multivariate polynomials with sparse
              exponents},
   JOURNAL = {J. Complexity},
  FJOURNAL = {Journal of Complexity},
    VOLUME = {87},
      YEAR = {2025},
     PAGES = {Paper No. 101922, 29},
      ISSN = {0885-064X,1090-2708},
   MRCLASS = {68W30 (11Y16 14Q20 65D05)},
  MRNUMBER = {4843897},
       DOI = {10.1016/j.jco.2024.101922},
       
}

@inproceedings{GiorgiGrePerrRoc22,
author = {Giorgi, Pascal and Grenet, Bruno and Perret du Cray, Armelle and Roche, Daniel S.},
title = {Sparse Polynomial Interpolation and Division in Soft-linear Time},
year = {2022},
isbn = {9781450386883},
publisher = {Association for Computing Machinery},
address = {New York, NY, USA},
doi = {10.1145/3476446.3536173},
booktitle = {Proceedings of the 2022 International Symposium on Symbolic and Algebraic Computation},
pages = {459–468},
numpages = {10},
location = {Villeneuve-d'Ascq, France},
series = {ISSAC '22}
}

@book {gathenGerhard03,
    AUTHOR = {von zur Gathen, Joachim and Gerhard, J\"urgen},
     TITLE = {Modern computer algebra},
   EDITION = {Second},
 PUBLISHER = {Cambridge University Press, Cambridge},
      YEAR = {2003},
     PAGES = {xiv+785},
      ISBN = {0-521-82646-2},
   MRCLASS = {68W30 (11Y16 68-01 68-02)},
  MRNUMBER = {2001757},
MRREVIEWER = {Jeffrey\ O.\ Shallit},
}

@inproceedings{LiuNguyenYang26,
author = {Yiping Liu and Hoai-An Nguyen and Junzhao Yang},
title = {Numerical Linear Algebra in Linear Space},
booktitle = {Proceedings of the 2026 Annual ACM-SIAM Symposium on Discrete Algorithms (SODA)},
year={2026},
doi = {10.1137/1.9781611978971.230},
}

@incollection {umans08,
    AUTHOR = {Umans, Christopher},
     TITLE = {Fast polynomial factorization and modular composition in small
              characteristic},
 BOOKTITLE = {S{TOC}'08},
     PAGES = {481--490},
 PUBLISHER = {ACM, New York},
      YEAR = {2008},
      ISBN = {978-1-60558-047-0},
   MRCLASS = {68W30 (68W20)},
  MRNUMBER = {2582906},
       DOI = {10.1145/1374376.1374445},
       URL = {https://doi.org/10.1145/1374376.1374445},
}

\end{document}